\definecolor{deepblue}{rgb}{0,0,0.5}
\definecolor{deepred}{rgb}{0.6,0,0}
\definecolor{deepgreen}{rgb}{0,0.5,0}
\definecolor{halfgray}{gray}{0.55}
\definecolor{ipythonframe}{RGB}{207, 207, 207}
\definecolor{ckeyword}{HTML}{7F0055}
\definecolor{ccomment}{HTML}{3F7F5F}
\definecolor{cnumber}{HTML}{2A0099}
\definecolor{pblue}{rgb}{0.13,0.13,1}
\definecolor{pgreen}{rgb}{0,0.5,0}
\definecolor{pred}{rgb}{0.9,0,0}
\definecolor{pgrey}{rgb}{0.46,0.45,0.48}
\definecolor[named]{ACMBlue}{cmyk}{1,0.1,0,0.1}
\definecolor[named]{ACMYellow}{cmyk}{0,0.16,1,0}
\definecolor[named]{ACMOrange}{cmyk}{0,0.42,1,0.01}
\definecolor[named]{ACMRed}{cmyk}{0,0.90,0.86,0}
\definecolor[named]{ACMLightBlue}{cmyk}{0.49,0.01,0,0}
\definecolor[named]{ACMGreen}{cmyk}{0.20,0,1,0.19}
\definecolor[named]{ACMPurple}{cmyk}{0.55,1,0,0.15}
\definecolor[named]{ACMDarkBlue}{cmyk}{1,0.58,0,0.21}
\lstdefinelanguage{Solidity}{ keywords=[1]{anonymous, synchronized,
	assembly, assert, break, case, catch, class, constant,
	continue, contract, debugger, default, delete, do, else,
	event, export, external, finally, for, function, if,
	implements, import, in, indexed, instanceof, interface,
	internal, is, length, library, 
         modifier, new, payable,
	pragma, private, protected, public, pure, require, returns,
	send, 
        struct, super, switch, then, throw, transfer,
	try, typeof, using, view, while, with, ecrecover}, 
	keywordstyle=[1]\bfseries,
	keywords=[2]{bool, bytes, bytes1, bytes2, bytes3, bytes4, bytes5, bytes6, bytes7, bytes8, bytes9, bytes10, bytes11, bytes12, bytes13, bytes14, bytes15, bytes16, bytes17, bytes18, bytes19, bytes20, bytes21, bytes22, bytes23, bytes24, bytes25, bytes26, bytes27, bytes28, bytes29, bytes30, bytes31, bytes32, enum, int, int8, int16, int24, int32, int40, int48, int56, int64, int72, int80, int88, int96, int104, int112, int120, int128, int136, int144, int152, int160, int168, int176, int184, int192, int200, int208, int216, int224, int232, int240, int248, int256, mapping, string, uint, uint8, uint16, uint24, uint32, uint40, uint48, uint56, uint64, uint72, uint80, uint88, uint96, uint104, uint112, uint120, uint128, uint136, uint144, uint152, uint160, uint168, uint176, uint184, uint192, uint200, uint208, uint216, uint224, uint232, uint240, uint248, uint256, var, void, ether, finney, szabo, wei, days, hours, minutes, seconds, weeks, years},	
	keywordstyle=[2]\bfseries,
	keywords=[3]{   
        this, true, false, msg, data, sender, value, sig,
	value, now, tx},	
    keywordstyle=[3]\bfseries,
    identifierstyle=, sensitive=false, comment=[l]{//},
    morecomment=[s]{/*}{*/}, commentstyle=\color{gray}\ttfamily,
    stringstyle=\ttfamily, morestring=[b]', morestring=[b]"
    }
\newcommand{\scode}[1]{\lstinline[language=Solidity,basicstyle=\small\ttfamily]{#1}}
\newcommand{\code}[1]{\scode{#1}}
\definecolor{shadecolor}{gray}{1.00}
\definecolor{ddarkgray}{gray}{0.75}
\definecolor{darkgray}{gray}{0.30}
\definecolor{light-gray}{gray}{0.87}
\newcommand{\ie}{\emph{i.e.}\xspace}
\newcommand{\eg}{\emph{e.g.}\xspace}
\newcommand{\set}[1]{\left\{{#1}\right\}}
\newcommand{\plname}[1]{\textsf{#1}\xspace}
\newcommand{\solidity}{\plname{Solidity}}
\newcommand{\EVM}{\textsf{EVM}\xspace}
\newcommand{\evm}{\textsf{EVM}\xspace}
\newcommand{\tuple}[1]{\langle #1 \rangle}
\newcommand{\rrderivproof}{\Rightarrow}
\newsavebox\myboxA
\newsavebox\myboxB
\newlength\mylenA
\newcommand*\xoverline[2][0.75]{%
    \sbox{\myboxA}{$\m@th#2$}%
    \setbox\myboxB\null
    \ht\myboxB=\ht\myboxA%
    \dp\myboxB=\dp\myboxA%
    \wd\myboxB=#1\wd\myboxA
    \sbox\myboxB{$\m@th\overline{\copy\myboxB}$}
    \setlength\mylenA{\the\wd\myboxA}
    \addtolength\mylenA{-\the\wd\myboxB}%
    \ifdim\wd\myboxB<\wd\myboxA%
       \rlap{\hskip 0.7\mylenA\usebox\myboxB}{\usebox\myboxA}%
    \else
        \hskip -0.5\mylenA\rlap{\usebox\myboxA}{\hskip 0.7\mylenA\usebox\myboxB}%
    \fi}
\lstdefinestyle{numbers}
{numbers=left, numberstyle=\tiny}
\newcommand{\ptabstract}{\pi}
\newcommand{\ptabstracttop}{\ptabstract_{_\top}}
\newcommand{\ptabstractbot}{\ptabstract_{_\bot}}
\newcommand{\eqn}{{\mathcal{X}}}
\newcommand{\eq}[1]{{\mathcal{X}_{\hdir{#1}}}}
\newcommand{\lub}{\sqcup}
\def\anno#1{{\ooalign{\hfil\raise.07ex\hbox{\small{\rm
          \textcolor{red}{{\scriptsize #1}}}}\hfil%
        \crcr{\small \textcolor{blue}{\mathhexbox20D}}}}}
\def\annoblack#1{{\ooalign{\hfil\raise.07ex\hbox{\small{\rm
          \textcolor{black}{{\scriptsize #1}}}}\hfil%
        \crcr{\small \textcolor{black}{\mathhexbox20D}}}}}
\newcommand{\fstate}[2]{\{\tuple{#1} \mapsto \{\tuple{#2}\}\}}
\newcommand{\hdir}[1]{\text{\texttt{#1}}}
\newcommand{\CFG}[0]{\ensuremath{\textit{CFG}}}
\newcommand{\rrrulewide}[2]{
\begin{minipage}{0.9\textwidth}
\vspace*{0.1cm}
\begin{center}
\ensuremath{
\begin{array}{c}
#1 \\
\hline
#2 
\end{array}
}
\end{center}
\end{minipage}
}
\newcommand{\ruleskip}{ \\ } 
\newcommand{\proofend}{\hfill {\tiny $\square$}}
\newcommand{\exampleend}{\hfill {\tiny $\blacksquare$}}
\newcommand{\mycomment}[1]{{\color{black}{#1}}}
\newcommand{\rulename}[2]{\ensuremath{rulename}}
\begin{document}
\title{Analyzing Smart Contracts: From EVM to a sound Control-Flow
  Graph}
%
%
\author{Elvira Albert$^{1,2}$ \and Jes\'us Correas$^2$ \and Pablo
  Gordillo$^2$ \and Alejandro Hern\'andez-Cerezo$^2$ \and \\Guillermo Rom\'an-D\'iez$^3$ \and Albert
  Rubio$^{1,2}$ }

\institute{
  Instituto de Tecnolog\'ia del Conocimiento, Spain\and
    Complutense University of Madrid,  Spain \and
  Universidad Polit\'ecnica de Madrid, Spain}
\maketitle


\begin{abstract}

  The \EVM language is a simple stack-based language with words of 256
  bits, with
  one significant difference between the \EVM and other virtual
  machine languages (like Java Bytecode or CLI for .Net programs): the
  use of the stack for saving the jump addresses instead of having it
  explicit in the code of the jumping instructions.
  Static analyzers need the complete control flow graph (CFG) of the
  \EVM program in order to be able to represent all its execution
  paths.
  This report addresses the problem of obtaining a precise and
  complete stack-sensitive CFG by means of a static analysis, cloning
  the blocks that might be executed using different states of the
  execution stack. The soundness of the analysis presented is proved.

\end{abstract}




\section{\EVM Language}
\label{sec:evm-language}

The \EVM language is a simple stack-based language with words of 256
bits with a local volatile memory that behaves as a simple
word-addressed array of bytes, and a persistent storage that is part
of the blockchain state.  A more detailed description of the language
and the complete set of operation codes can be found
in~\cite{yellow}. In this section, we focus only on the relevant
characteristics of the \EVM that are needed for describing our work.
\mycomment{We will consider \EVM programs that satisfy two
  constraints: (1) jump addresses are constants,\ie they are
  introduced by a \code{PUSH} operation, they do not depend on input
  values and they are not stored in memory nor storage, and (2) the
  size of the stack when executing a jump instruction can be bounded
  by a constant. These two cases are mostly produced by the use of
  recursion and higher-order programming in the high-level language
  that compiles to \EVM, as \eg Solidity.}



\begin{figure}[h]
  \begin{minipage}{0.74\textwidth}
  \begin{center}
  
\begin{tabular}{l}
\begin{lstlisting}[language=Solidity]
contract EthereumPot {
  address[] public addresses;
  address public winnerAddress;
  uint[] public slots;
  function __callback (bytes32 _queryId, string _result, bytes _proof){
    if (msg.sender != oraclize_cbAddress()) throw; 
    random_number = uint(sha3(_result))
    winnerAddress = findWinner(random_number);
    amountWon = this.balance * 98 / 100 ;
    winnerAnnounced(winnerAddress, amountWon);
    if (winnerAddress.send(amountWon)) {
       if (owner.send(this.balance)) {
         openPot();
       }
    }
  }

  function findWinner (uint random) constant returns (address winner) { 
    for (uint i = 0; i < slots.length; i++) {
      if (random <= slots[i]) {
        return addresses[i];
      }
    }
  }
  // Other functions
}
\end{lstlisting}
\end{tabular}

%
%
%
%
%

  \end{center}
  \end{minipage}
  \hspace{0.2cm}
  \begin{minipage}{0.23\textwidth}
   \begin{tabular}{l}
\hspace{3pt}
{\begin{lstlisting}[basicstyle=\scriptsize\ttfamily,numbers=none]
$\cdots$
64B: JUMPDEST
64C: PUSH1 0x00
64E: DUP1
64F: PUSH1 0x00
651: SWAP1
652: POP
653: JUMPDEST
654: PUSH1 0x03
656: DUP1
657: SLOAD
658: SWAP1
659: POP
65A: DUP2
65B: LT
65C: ISZERO
65D: PUSH2 0x06D0
660: JUMPI
661: PUSH1 0x03
663: DUP2
664: DUP2
665: SLOAD
666: DUP2
$\cdots$
941: JUMPDEST
942: MOD
943: ADD
944: PUSH1 0x0A
946: DUP2
947: SWAP1
948: SSTORE
949: POP
94A: PUSH2 0x0954
94D: PUSH1 0x0A
94F: SLOAD
950: PUSH2 0x064B
953: JUMP
$\cdots$
\end{lstlisting}}
\end{tabular}

  \end{minipage} 
  \caption{Excerpt of \textsf{Solidity} code for \code{EthereumPot} contract (left), and fragment of \EVM code for
 function \code{findWinner} (right)}
\label{fig:solevm}
\end{figure}

\begin{example}

In order to describe our techniques, we use as running example a simplified
version (without calls to the external service Oraclize and the authenticity
proof verifier)
of the 
contract~\cite{etherpot} that implements a lottery system.
During a game, players 
call a method \code{joinPot} to buy lottery tickets; each player's
address is appended to an array \code{addresses} of current players,
and the number of tickets is appended to an array \code{slots}, both
having variable length.  After some time has elapsed, anyone can call
\code{rewardWinner} which calls the \code{Oraclize} service to obtain
a random number for the winning ticket.  If all goes according to
plan, the \code{Oraclize} service then responds by calling the
\code{__callback} method with this random number and the authenticity
proof as arguments.
A new instance of the game is then started, and the winner is allowed
to withdraw her balance using a \code{withdraw}
method. Figure~\ref{fig:solevm} shows an excerpt of the \solidity code
(including the public function \code {findWinner}) and a fragment of
the \EVM code produced by the compiler. The \solidity source code is
shown for readability, as our analysis works directly on the \EVM
code.

\noindent To the right of Figure~\ref{fig:solevm} we show a fragment
of the \EVM code of method \code{findWinner}. It can be seen that the
\EVM has instructions for operating with the stack contents, like
\code{DUP}\textit{x} or \code{SWAP}\textit{x}; for comparisons, like
\code{LT}, \code{GT}; for accessing the storage (memory) of the
contract, like \code{SSTORE}, \code{SLOAD} (\code{MLOAD},
\code{MSTORE}); to add/remove elements to/from the stack, like
\code{PUSH}\textit{x}/ \code{POP}; and many others (we again refer
to~\cite{yellow} for details).
Some instructions increment the program counter in several units
(e.g., \code{PUSH}\textit{x}~\code{Y} adds a word with the constant
\code{Y} of \textit{x} bytes to the stack and increments the program
counter by $x+1$). In what follows, we use $size(b)$ to refer to the
number of units that instruction $b$ increments the value of the
program counter. For instance $size(\text{\code{POP}}) = 1$,
$size(\text{\code{PUSH1}}) = 2$ or $size(\text{\code{PUSH3}}) = 4$.
\exampleend
\end{example}

One significant difference between the \EVM and other virtual machine
languages (like Java Bytecode or CLI for .Net programs) is the use of
the stack for saving the jump addresses instead of having it explicit
in the code of the jumping instructions.  In \EVM, instructions
\code{JUMP} and \code{JUMPI} will jump, unconditionally and
conditionally respectively, to the program counter stored in the top
of the execution stack. This feature of the \EVM requires, in order to
obtain the control flow graph of the program, to keep track of the
information stored in the stack. Let us illustrate it with an example.

\begin{example}
\label{ex:semantics}

In the \EVM code to the right of Figure~\ref{fig:solevm} we can see
two jump instructions at program points \hdir{953} and \hdir{660},
respectively, and the jump address (\hdir{64B} and \hdir{6D0}) is
stored in the instruction immediately before them: \hdir{950} or
\hdir{65D}.
It then jumps to this destination by using the instruction
\hdir{JUMPDEST} (program points \hdir{941}, \hdir{64B}, \hdir{653}).

\exampleend
\end{example}

We start our analysis by defining the set $\mathcal{J}$, which
contains all possible jump destinations in an \EVM program
$P \equiv b_0,\dots, b_p$:
 \[ \mathcal{J}(P)=\{pc ~|~ b_{pc} \in P \wedge b_{pc} \equiv
\text{\code{JUMPDEST}}\}.
\]
We use $b_{pc}\in P$ for referring to the instruction at program
counter $pc$ in the \EVM program $P$.  In what follows, we omit $P$
from definitions when it is clear from the context, e.g., we use
$\mathcal{J}$ to refer to $\mathcal{J}(P)$.

\begin{example}
  Given the \EVM code that corresponds to function \code{findWinner},
  we get the following set:
\[
\mathcal{J} = \{
\hdir{123}, \hdir{142}, \hdir{954}, \hdir{64B}, \hdir{6D0}, \hdir{66F}, \hdir{653},
\hdir{6C3}, \hdir{691}, \hdir{6D1}, \hdir{6BA}
\}
\]
\exampleend
\end{example}

The first step in the computation of the CFG is to define the notion
of \emph{block}.
In general~\cite{AhoLSU07}, given a program $P$, a \emph{block} is a
maximal sequence of straight-line consecutive code in the program with
the properties that the flow of control can only enter the block
through the first instruction in the block, and can only leave the
block at the last instruction.  Let us define the concept of block in
an \EVM program:

\begin{definition}[blocks]\label{def:block}
Given an \EVM program $P \equiv b_0,\ldots,b_p$, we define

{\small 
\[ 
blocks(P) = \bigg\{B_i \equiv b_i,\ldots,b_j ~\bigg|~   
\begin{array}{l}
(\forall k. i<k<j, b_k \not\in Jump \cup End \cup
  \{\text{\code{JUMPDEST}}\}) ~\wedge\\
(~i {=} 0 \vee b_i {\equiv} \text{\code{JUMPDEST}} \vee  b_{i-1} {=}
  \text{\code{JUMPI}}~) ~\wedge\\ 
(~j {=} p \vee b_j \in Jump \vee b_j \in End \vee b_{j+1} {\equiv} \text{\code{JUMPDEST}}~)  
\end{array}
\bigg\}
\]
where 
\[
\begin{array}{rcl}
Jump & = & \{\text{\code{JUMP},\code{JUMPI}} \} \\
End & = &
\{
\text{\code{REVERT}},
\text{\code{STOP}},
\text{\code{INVALID}}
\} \\
\end{array}
\]
}
\end{definition}

\begin{figure}[t] 
 \begin{center}
\fbox{ \includegraphics[height=0.4\textheight,width=
0.95\textwidth]{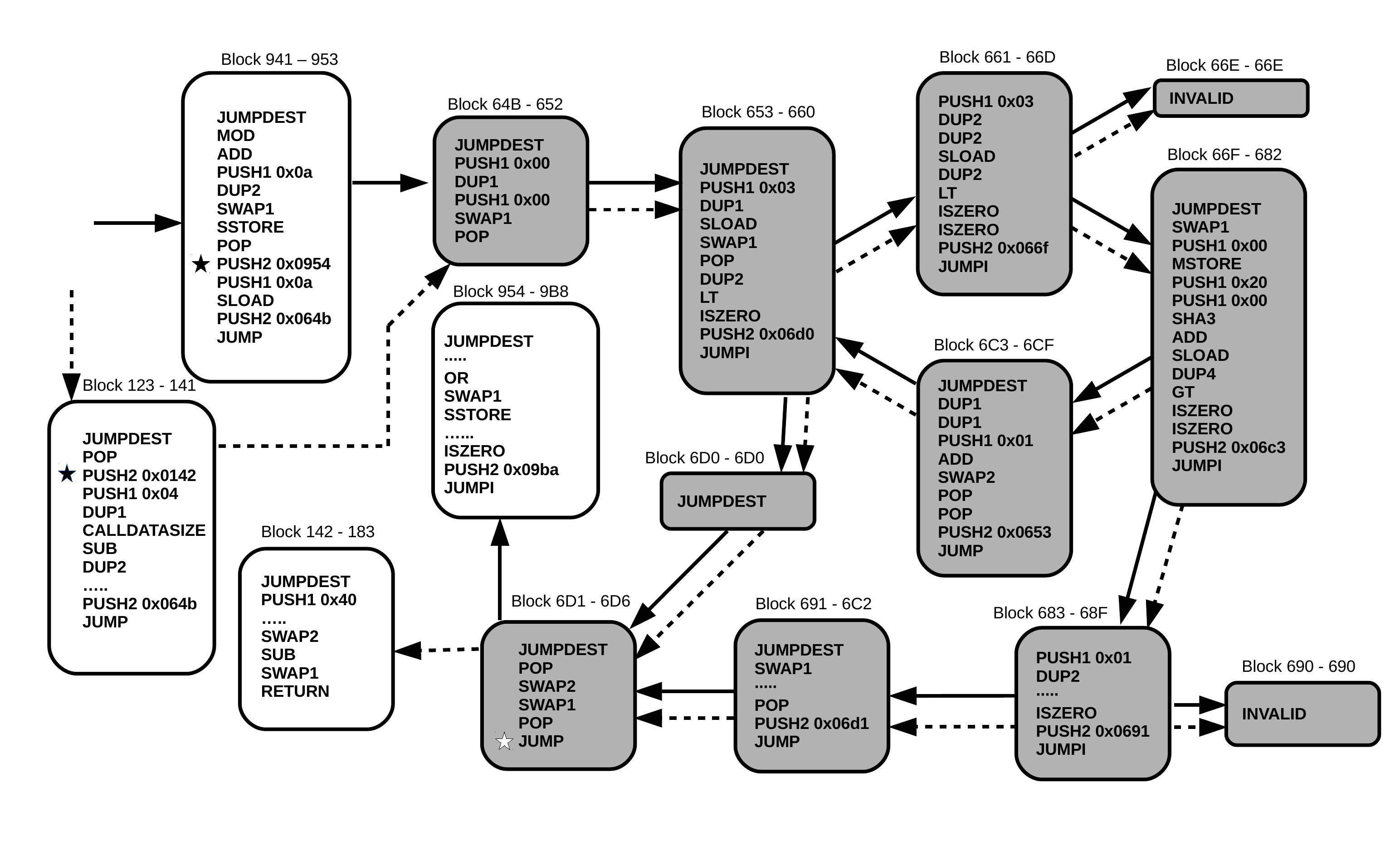}} 
\end{center}
\caption{Fragment of the CFG of \code{findWinner}}
\label{fig:cfg-ins}
\end{figure}

\begin{example}
\label{ex:cfg-ins-blocks}

Figure~\ref{fig:cfg-ins} shows the blocks (nodes) obtained for
\code{findWinner} and their corresponding jump invocations.
Solid and dashed edges represent the two possible execution paths
depending on the entry block: solid edges represent the path that
starts from block \hdir{941} and dashed edges the path that starts
from \hdir{123}.
Note that most of the blocks start with a \code{JUMPDEST} instruction
(\hdir{123}, \hdir{941}, \hdir{64B}, \hdir{653}, \hdir{66F},
\hdir{954}, \hdir{6C3}, \hdir{691}, \hdir{142}, \hdir{6D1},
\hdir{6D0}).  The rest of the blocks start with instructions that come
right after a \code{JUMPI} instruction (\hdir{661}, \hdir{683}).
Analogously, most blocks end in a \code{JUMP} (\hdir{941}, \hdir{6C3},
\hdir{123}, \hdir{691}, \hdir{6D1}), \code{JUMPI} (\hdir{653},
\hdir{661}, \hdir{66F}, \hdir{683}) or \code{RETURN} (\hdir{142})
instruction or in the instruction that precedes \code{JUMPDEST}
(\hdir{64B}).
\exampleend
\end{example}

Observing the blocks in Figure~\ref{fig:cfg-ins}, we can see that most
\code{JUMP} instructions use the address introduced in the \code{PUSH}
instruction executed immediately before the \code{JUMP}. However, in
general, in \EVM code, it is possible to find a \code{JUMP} whose
address has been stored in a different block.
This happens for instance when a public function is invoked privately
from other methods of the same contract, the returning program counter
is introduced by the invokers at different program points and it will
be used in a unique \code{JUMP} instruction when the invoked method
finishes in order to \emph{return} to the particular caller that
invoked that function.

\begin{example}\label{ex:why-clone}
  In Figure~\ref{fig:cfg-ins}, at block \hdir{6D1} we have a
  \code{JUMP} (marked with \ding{73}) whose address is not pushed in
  the same block. This \code{JUMP} takes the returned address from
  function \code{findWinner}. If \code{findWinner} is publicly
  invoked, it jumps to address \hdir{142} (pushed at block \hdir{123}
  at $\star$) and if it is invoked from \code{__callback} it jumps to
  \hdir{954} (pushed at block \hdir{941} at $\star$).
\end{example}

\subsection{Operational Semantics}

Figure~\ref{fig:jumpsemantics} shows the semantics of some
instructions involved in the computation of the values stored in the
stack for handling jumps.  The state of the program $S$ is a tuple
$\tuple{pc, \tuple{n,\sigma}}$ where $pc$ is the value of the program
counter with the index of the next instruction to be executed, and
$\tuple{n,\sigma}$ is a \emph{stack state} as defined in
Section~\ref{sec:from-evm-to-cfg} ($n$ is the number of elements in
the stack, and $\sigma$ is a partial mapping that relates some stack
positions with a set of jump destinations).
Interesting rules are the ones that deal with jump destination addresses
on the stack:  Rule \textsc{(4)} adds a new address on the stack, and
Rules  \textsc{(6)} and \textsc{(8-10)} copy or exchange existing
addresses on top of the stack, respectively. 
Rules \textsc{(1)} to \textsc{(3)} perform a jump in the program and
therefore consume the address placed on top of the stack, plus an
additional word in the case of \code{JUMPI}.  If the
instructions considered in this simplified semantics do not handle 
jump addresses, the corresponding rules just remove some values from
the stack in the program state $S$ (Rules \textsc{(5)}, \textsc{(7)}
and \textsc{(11)}).
The remaining \EVM instructions not explicitely considered in this
simplified semantics are generically represented by Rule \textsc{(12)}
with $b_{pc}^{\delta , \alpha}$, where $\delta$ is the number of items
removed from stack when $b_{pc}$ is executed, and $\alpha$ is the
number of additional items placed on the stack.
Complete executions are traces of the form
$S_0 \rrderivproof S_1 \rrderivproof \dots \rrderivproof S_n$ where $S_0 \equiv
\tuple{0, \tuple{0,\sigma_{\emptyset}}}$ is the initial
state, $\sigma_{\emptyset}$ is the empty mapping, and $S_n$ corresponds
to the last state.  There are no infinite traces, as any transaction
that executes \EVM code has a finite gas limit and every instruction
executed consumes some amount of gas.  When the gas limit is exceeded,
an out-of-gas exception occurs and the program halts immediately.


\begin{figure}[h]
{\small
  \begin{tabular}{|p{0.3cm}p{14cm}|}
\hline
\textsc{(1)} & 
      \rrrulewide{b_{pc} = \text{\code{JUMP}}}
      {\tuple{pc, \tuple{n,\sigma}}
        \rrderivproof
        \tuple{\sigma(s_{n-1}) , \tuple{n-1, \sigma\backslash[s_{n-1}]}}}
\ruleskip
\textsc{(2)} & 
      \rrrulewide{b_{pc} = \text{\code{JUMPI}}}
      {\tuple{pc, \tuple{n,\sigma}}
        \rrderivproof
        \tuple{\sigma(s_{n{-}1}), \tuple{n{-}2, \sigma\backslash[s_{n{-}1},s_{n{-}2}]}}}
\ruleskip
\textsc{(3)} & 
      \rrrulewide{b_{pc} = \text{\code{JUMPI}}}
      {\tuple{pc, \tuple{n,\sigma}}
        \rrderivproof
        \tuple{pc{+}size(b_{pc}), \tuple{n{-}2, \sigma\backslash[s_{n{-}1},s_{n{-}2}]}}}
\ruleskip

\textsc{(4)} & 
      \rrrulewide{b_{pc} = \text{\code{PUSH}}x~v, \text{v} \in \mathcal{J}} 
      {\tuple{pc, \tuple{n,\sigma}}
        \rrderivproof
        \tuple{pc{+}size(b_{pc}), \tuple{n+1, \sigma[s_n \mapsto \{v\}]}}}
\ruleskip

\textsc{(5)} & 
      \rrrulewide{b_{pc} = \text{\code{PUSH}}x~v, v \notin \mathcal{J}}
      {\tuple{pc, \tuple{n,\sigma}}
        \rrderivproof
        \tuple{pc{+}size(b_{pc}), \tuple{n+1, \sigma}}}
\ruleskip

\textsc{(6)} & 
      \rrrulewide{b_{pc} = \text{\code{DUP}}x, s_{n {-} x} \in dom(\sigma)}
      {\tuple{pc, \tuple{n,\sigma}}
        \rrderivproof
        \tuple{pc{+}size(b_{pc}), \tuple{n+1, \sigma[s_n \mapsto \sigma(s_{n {-} x})]}}}
\ruleskip

\textsc{(7)} & 
      \rrrulewide{b_{pc} = \text{\code{DUP}}x, s_{n {-} x} \notin dom(\sigma)}
      {\tuple{pc, \tuple{n,\sigma}}
        \rrderivproof
        \tuple{pc{+}size(b_{pc}), \tuple{n+1, \sigma}}}
      \ruleskip

\textsc{(8)} & 
      \rrrulewide{b_{pc} = \text{\code{SWAP}}x,  s_{n {-} 1} \in dom(\sigma),  s_{n {-} x {-} 1} \in dom(\sigma)}
      {\tuple{pc, \tuple{n,\sigma}}
        \rrderivproof 
        \tuple{pc{+}size(b_{pc}), \tuple{n, \sigma[ s_{n {-} x {-} 1} \mapsto \sigma(s_{n {-} 1}),  s_{n {-} 1} \mapsto  \sigma(s_{n {-} x {-} 1})]}}}~~~~~~
      \ruleskip

\textsc{(9)} & 
      \rrrulewide{b_{pc} = \text{\code{SWAP}}x,  s_{n {-} 1} \in dom(\sigma),  s_{n {-} x {-} 1} \notin dom(\sigma)}
      {\tuple{pc, \tuple{n,\sigma}}
        \rrderivproof
        \tuple{pc{+}size(b_{pc}), \tuple{n, \sigma[s_{n {-} x {-} 1} \mapsto \sigma(s_{n {-} 1})]\backslash[s_{n {-} 1}]}}}
\ruleskip

\textsc{(10)} & 
      \rrrulewide{b_{pc} = \text{\code{SWAP}}x,  s_{n {-} 1} \notin dom(\sigma),  s_{n {-} x {-} 1} \in dom(\sigma)}
      {\tuple{pc, \tuple{n,\sigma}}
        \rrderivproof
        \tuple{pc{+}size(b_{pc}), \tuple{n, \sigma[s_{n {-} 1} \mapsto \sigma(s_{n {-} x {-} 1})]\backslash[s_{n {-} x {-} 1}]}}}
      \ruleskip

\textsc{(11)} & 
      \rrrulewide{b_{pc} = \text{\code{SWAP}}x,  s_{n {-} 1} \notin dom(\sigma),  s_{n {-} x {-} 1} \notin dom(\sigma)}
      {\tuple{pc, \tuple{n,\sigma}}
        \rrderivproof
        \tuple{pc{+}size(b_{pc}), \tuple{n, \sigma\backslash[s_{n{-}1}, s_{n {-} x {-} 1}] }}}
      \ruleskip

\textsc{(12)} & 
      \rrrulewide{
                b_{pc}^{\delta , \alpha} \notin \textit{End} \cup
                \textit{Jump} \cup \{\text{\code{PUSH}}x,
                \text{\code{DUP}}x, \text{\code{SWAP}}x \}}
      {\tuple{pc, \tuple{n,\sigma}}
        \rrderivproof
        \tuple{pc{+}size(b_{pc}), \tuple{n {-} \delta {+} \alpha, \sigma\backslash[s_{n-1}, ..., s_{n - \delta}]}}}
\\
      \hline
  \end{tabular}
}
\caption{Simplified \EVM semantics for handling jumps\label{EVM_semantics}}
\label{fig:jumpsemantics}
\end{figure}


\section{From \EVM to a Sound CFG}
\label{sec:from-evm-to-cfg}

As we have seen in the previous section, the addresses used by the
jumping instructions are stored in the execution stack. In \evm, blocks
can be reached with different stack sizes an contents. As it is used
in other tools~\cite{gigahorse,vandal,madmax}, to precisely infer the
possible addresses at jumping program points, we need a
\emph{context-sensitive} static analysis that analyze separately all
blocks for each possible stack than can reach them (only considering
the addresses stored in the stack).
This section presents an \emph{address analysis} of \EVM programs
which allows us to compute a complete CFG of the
\EVM code.
To compute the addresses involved in the jumping instructions, we
define a static analysis which soundly infers all possible addresses
that a \code{JUMP} instruction could use.

In our address analysis we aim at having the stack represented by
explicit variables. Given the characteristics of \EVM programs, the
execution stack of \EVM programs produced from \solidity programs
without recursion can be flattened. Besides, as the size of the stack
of the Ethereum Virtual Machine is bounded to 1024 elements
(see~\cite{yellow}), the number of stack variables is limited. We use
$\mathcal{V}$ to represent the set of all possible stack variables
that may be used in the program.
The first element we define for our analysis is its abstract state:

\paragraph{The abstract state}
Our analysis uses a partial representation of the execution stack as
basic element.  To this end, we use the notion of \emph{stack state}
as a pair $\tuple{n,\sigma}$, where $n$ is the number of elements in
the stack, and $\sigma$ is a partial mapping that relates some stack
positions with a set of jump destinations.  A position in the stack is
referred as $s_i$ with $0 \leq i < n$, and $s_{n-1}$ is the position
at the top of the stack.
The \emph{abstract state} of the analysis is defined on the set of all
stack states
$\mathcal{S} = \{ \langle n, \sigma \rangle ~|~ 0 \leq n \leq
|\mathcal{V}| \wedge \sigma(s) \in \Sigma_n \}$ where $\Sigma_n$ is
the set of all mappings using up to $n$ stack variables.

\begin{definition}[abstract state]
  The abstract state is a partial \emph{mapping} $\pi$ of the form
  $\mathcal{S} \mapsto \mathcal{P}(\mathcal{S})$. 
\end{definition}

The application of $\sigma$ to an element $s_i$, that is, $\sigma(s_i)$,
corresponds to the set of jump destinations that a stack variable $s_i$ can
contain. The first element of the tuple, that is, $n$, stores the size of
the stack in the different abstract states.

The abstract domain is the lattice $\langle \it AS, \ptabstracttop,
\ptabstractbot, \sqcup, \sqsubseteq\rangle$, where $\it AS$ is the set of
abstract states and $\ptabstracttop$ is the top of the lattice defined as
the mapping $\pi_\top$ such that $\forall s \in \mathcal{S}, \pi_{\top}(s) =
\mathcal{S}$.
The bottom element of the lattice $\ptabstractbot$ is the empty mapping.
Now, to define $\sqcup$ and $\sqsubseteq$, we first define the function
$img(\ptabstract,s)$ as $\ptabstract(s)$ if $s \in dom(\pi)$ and $\emptyset$,
otherwise. 
Given two abstract states $\ptabstract_1$ and $\ptabstract_2$, we
use $\ptabstract = \ptabstract_1 \lub \ptabstract_2$ to denote that $\ptabstract
$ is the least upper-bound defined as follows
$
\forall s \in dom(\ptabstract_1) \cup dom(\ptabstract_2), 
\ptabstract(s) = img(\ptabstract_1,s) \cup img(\ptabstract_2,s)$.
%
At this point, $\ptabstract_1 \sqsubseteq \ptabstract_2$ holds iff
$dom(\ptabstract_1) \subseteq dom(\ptabstract_2)$ and $\forall s \in
dom(\pi_1), \pi_1(s) \subseteq \pi_2(s). $

\paragraph{Transfer function} One of the ingredients of our analysis
is a \emph{transfer function} that models the effect of each \EVM
instruction on the abtract state for the different instructions.
Given a stack state $s$ of the form $\langle n, \sigma \rangle$,
Figure~\ref{fig:transfer} defines the updating function
$\lambda(b_{pc}^{\delta,\alpha},s)$ where $b$ corresponds to the \EVM
instruction to be applied, $pc$ corresponds to the program counter of
the instruction and $\alpha$ and $\delta$ to the number of elements
placed to and removed from the \EVM stack when executing $b$,
respectively.
Given a map $m$ we will use $m[x\mapsto y]$ to indicate the result of
updating $m$ by making $m(x) = y$ while $m$ stays the same for all
locations different from $x$, and we will use $m\backslash[x]$ to
refer to a partial mapping that stays the same for all locations
different from $x$, and $m(x)$ is undefined.
%
By means of $\lambda$, we define the \emph{transfer function}
of our analysis.

\begin{definition}[transfer function]
\label{def:transfer}
Given the set of abstract states $AS$ and the set of \EVM instructions
$Ins$, the transfer function $\tau$ is defined as a mapping of the
form
\[
\tau~:~ Ins \times AS \mapsto AS
\]
is defined as follows:
\[
\tau (b,\pi) = \pi' ~~~~~~ \text{where} ~~~ 
\forall s \in dom(\pi), \pi'(s) = \lambda(b,\pi(s))
\]
\end{definition}


\begin{figure}

\noindent
\begin{center}
{\scriptsize
{\renewcommand{\arraystretch}{1.4}
\begin{tabular}{r|l|ll|}
\cline{2-4}
& 
\multicolumn{1}{c|}{$b^{\delta,\alpha}$}        & 
\multicolumn{2}{c|}{$\lambda(b, \langle n,\sigma\rangle)$} \\[0.05cm]
\cline{2-4}
\multirow{2}{*}{{\scriptsize (1)}} & 
\multirow{2}{*}{\code{PUSH}$x$~$v$}   & 
$\langle n+1, \sigma[s_n \mapsto \{v\}] \rangle$
& 
\textit{when} $v \in \mathcal{J}$
\\
 &
 &
$\langle n+1, \sigma \rangle$
& 
\textit{when} $v \not\in \mathcal{J}$
\\
\cline{2-4}
\multirow{2}{*}{{\scriptsize (2)}} & 
\multirow{2}{*}{\code{DUP}x}   & 
$\langle n+1, \sigma \rangle$ & 
\textit{when} $s_{n-x} \not\in dom(\sigma)$
\\
 & 
 & 
$\langle n+1, \sigma[s_n \mapsto \sigma(s_{n-x})] \rangle$
& \textit{when} $s_{n-x} \in dom(\sigma)$
\\
\cline{2-4}
\multirow{4}{*}{{\scriptsize (3)}} & 
\multirow{5}{*}{\code{SWAP}x}   & 
$\langle n, \sigma \rangle$ & 
\textit{when} $s_{n-1} \not\in dom(\sigma) \wedge s_{n{-}x{-}1} \not\in dom(\sigma)$
\\
&
&
$\langle n, \sigma[s_{n-x-1} \mapsto \sigma(s_{n-1}), s_{n-1} \mapsto \sigma(s_{n-x-1})]
    \rangle$ & \textit{when} $s_{n-1} \in dom(\sigma) \wedge s_{n{-}x{-}1} \in dom(\sigma)$
\\
&
&
$\langle n, \sigma[s_{n{-}1} \mapsto \sigma(s_{n{-}x{-}1})]\backslash\sigma[s_{n{-}x{-}1}] \rangle$ & 
\textit{when} $s_{n{-}1} \not\in dom(\sigma) \wedge s_{n{-}x{-}1} \in dom(\sigma)$
\\
& 
&
$\langle n, \sigma[s_{n{-}x{-}1} \mapsto \sigma(s_{n{-}1})]\backslash\sigma[s_{n{-}1}] \rangle$
& \textit{when} $s_{n{-}1} \in dom(\sigma) \wedge s_{n{-}x{-}1} \not\in dom(\sigma)$
\\
\cline{2-4}
{\scriptsize (4)} & 
\textit{otherwise}   & 
$\langle n-\delta+\alpha, \sigma\backslash[s_{n-1},\dots,s_{n-\delta}] \rangle$ & 
\\
\cline{2-4}

\end{tabular}
}
}
\end{center}
\caption{Updating function}
\label{fig:transfer}
\end{figure}

\begin{example}

  Given the following initial abstract state
  $\fstate{8,\set{}}{8,\set{}}$, which corresponds to the initial
  stack state for executing block \hdir{941}, the application of the
  transfer function $\tau$ to the block that starts at \EVM
  instruction \hdir{941}, produces the following results (between
  parenthesis we show the program point). To the right we show the
  application of the transfer function to block \hdir{123} with its
  initial abstract state $\fstate{2,\set{}}{2,\set{}}$.

\noindent
\begin{minipage}{7.7cm} 
{\scriptsize
\[
\begin{array}{lll}
(\hdir{941}) & \text{\code{JUMPDEST}} & \fstate{8,\set{}}{8,\set{}}\\
(\hdir{942}) & \text{\code{MOD}} & 			\fstate{8,\set{}}{7,\set{}} \\
(\hdir{943}) & \text{\code{ADD}} & 			\fstate{8,\set{}}{6,\set{}} \\
(\hdir{944}) & \text{\code{PUSH1 0A}} & 	\fstate{8,\set{}}{7,\set{}} \\
(\hdir{946}) & \text{\code{DUP2}} & 			\fstate{8,\set{}}{8,\set{}} \\
(\hdir{947}) & \text{\code{SWAP1}} & 		\fstate{8,\set{}}{8,\set{}} \\
(\hdir{948}) & \text{\code{SSTORE}} & 	\fstate{8,\set{}}{6,\set{}} \\
(\hdir{949}) & \text{\code{POP}} & 			\fstate{8,\set{}}{5,\set{}} \\
(\hdir{94A}) & \text{\code{PUSH2 0954}} & 	\fstate{8,\set{}}{6,\set{s_5 \mapsto
\hdir{954}}\}} \\
(\hdir{94D}) & \text{\code{PUSH1 0A}} & 	\fstate{8,\set{}}{7,\set{s_6 \mapsto \hdir{954}}\}} \\
(\hdir{94F}) & \text{\code{SLOAD}} & 		\fstate{8,\set{}}{7,\set{s_5 \mapsto \hdir{954}}\}} \\
(\hdir{950}) & \text{\code{PUSH2 064B}} & 	\fstate{8,\set{}}{8,\set{s_5 \mapsto
\hdir{954},s_7\mapsto \hdir{64B}}}\\
(\hdir{953}) & \text{\code{JUMP}} & 			\fstate{8,\set{}}{7,\set{s_5 \mapsto \hdir{954}\}}}
\\
\end{array}
\]	
}
\end{minipage}
\begin{minipage}{8cm} 
{\scriptsize
\[
\begin{array}{lll}
(\hdir{123}) & \text{\code{JUMPDEST}} & 		\fstate{2,\set{}}{2,\set{}} \\
(\hdir{124}) & \text{\code{POP}} & 			\fstate{2,\set{}}{1,\set{}} \\
(\hdir{125}) & \text{\code{PUSH2 0142}} & 	\fstate{2,\set{}}{2,\set{s_1 \mapsto
\hdir{142}}} \\
(\hdir{128}) & \text{\code{PUSH1 04}} & 	\fstate{2,\set{}}{3,\set{s_1 \mapsto \hdir{142}}} \\
(\hdir{12A}) & \text{\code{DUP1}} & 			\fstate{2,\set{}}{4,\set{s_1 \mapsto \hdir{142}}} \\
(\hdir{12B}) & \text{\code{CALLDATASIZE}} & 	\fstate{2,\set{}}{5,\set{s_1 \mapsto \hdir{142}}}\\
(\hdir{12C}) & \text{\code{SUB}} & 			\fstate{2,\set{}}{4,\set{s_1 \mapsto \hdir{142}}} \\
~~~~~~~\vdots \\
(\hdir{13A}) & \text{\code{SWAP1}} & 		\fstate{2,\set{}}{5,\set{s_1 \mapsto \hdir{142}}} \\
(\hdir{13B}) & \text{\code{POP}} & 			\fstate{2,\set{}}{4,\set{s_1 \mapsto \hdir{142}}} \\
(\hdir{13C}) & \text{\code{POP}} & 			\fstate{2,\set{}}{3,\set{s_1 \mapsto \hdir{142}}} \\
(\hdir{13D}) & \text{\code{POP}} & 			\fstate{2,\set{}}{2,\set{s_1 \mapsto \hdir{142}}} \\
(\hdir{13E}) & \text{\code{PUSH2 064B}} &
\fstate{2,\set{}}{3,\set{s_1\mapsto\hdir{142},s_2\mapsto \hdir{64B}}}\\
(\hdir{141}) & \text{\code{JUMP}} & 			\fstate{2,\set{}}{2,\set{s_1 \mapsto \hdir{142}}} \\
\end{array}
\]	
}
\end{minipage}

\exampleend
\end{example}

\subsection{Addresses equation system}
\label{eq:equation_system}

The next step consists in defining, by means of the transfer and the updating
functions, a constraint equation system to represent all possible jumping
addresses that could be valid for executing a \emph{jump} instruction in the
program.

\begin{definition}[addresses equation system]
\label{def:jumpeq}
Given an \evm program $P$ of the form $b_0,\ldots,b_p$, its \emph{addresses 
equation system}, $\mathcal{E}(P)$ includes the following equations
according to all \EVM bytecode instruction $b_{pc} \in P$:

\bigskip
\begin{center}
{\small
\begin{tabular}{rp{3.5cm}|p{1.7cm}p{0.3cm}lr}
\cline{2-6}
& 
\multicolumn{1}{c}{$b_{pc}$} &
\multicolumn{4}{c}{$C_{pc}$}
 \\
\cline{2-6}

(1) &
\code{JUMP} &
$\eqn_{\sigma(s_{n-1})}$ &
$\sqsupseteq$ &

$\text{\textit{idmap}}(\lambda(b_{pc}, \langle n, \sigma \rangle))$
~~~~~~~~~~~~ & 
$\forall s \in dom(\eqn_{pc}), \langle n,\sigma\rangle \in \eqn_{pc}(s)$
\\[0.2cm]
\cline{2-6}

\multirow{2}{*}{(2)} &
\multirow{2}{*}{\code{JUMPI}} &
$\eqn_{\sigma(s_{n-1})}$ &
$\sqsupseteq$ &
$\text{\textit{idmap}}(\lambda(b_{pc}, \langle n, \sigma \rangle))$
& 
$\forall s \in dom(\eqn_{pc}), \langle n,\sigma\rangle \in \eqn_{pc}(s)$
\\[0.1cm]
 &
 &
$\eqn_{pc+1}$ &
$\sqsupseteq$ &
$\text{\textit{idmap}}(\lambda(b_{pc}, \langle n, \sigma \rangle))$
& 
$\forall s \in dom(\eqn_{pc}), \langle n,\sigma\rangle \in \eqn_{pc}(s)$

\\[0.2cm]
\cline{2-6}

(3) &
$b_{pc} \not\in End \wedge$ \newline 
$b_{pc+size(b_{pc})} = \text{\code{JUMPDEST}}$ & 
$\eqn_{pc+size(b_{pc})}$ &
$\sqsupseteq$ &

$\text{\textit{idmap}}(\lambda(b_{pc}, \langle n, \sigma \rangle))$
~~~~~~~~~~~~ & 
$\forall s \in dom(\eqn_{pc}), \langle n,\sigma\rangle \in \eqn_{pc}(s)$

\\[0.2cm]
\cline{2-6}
 
(4) &
$b_{pc} \not\in End$ &
$\eqn_{pc+size(b_{pc})}$ &
$\sqsupseteq$ &
$\tau(b_{pc},\eqn_{pc})$ \\[0.2cm]
\cline{2-6}

(5) &
otherwise&
$\eqn_{pc+size(b_{pc})}$ &
$\sqsupseteq$ &
$\tau(b_{pc},\eqn_{pc})$ \\[0.2cm]
\cline{2-6}
 
\end{tabular}
}
\end{center}

\medskip
where $idmap(s)$ returns a map $\pi$ such that $dom(\pi) = \{s\}$ and
$\pi(s) = \{s\}$ and $size(b_{pc})$ returns the number of bytes of the
instruction $b_{pc}$.

\end{definition}

Observe that the addresses equation system will have equations for
all program points of the program. Concretely, variables of the form
$\eqn_{pc}$ store the \emph{jumping addresses} saved in the stack
after executing $b_{pc}$ for all possible entry stacks.  This
information will be used for computing all possible jump destinations
when executing \code{JUMP} or \code{JUMPI} instructions.
For computing the system, most instructions, cases (4) and (5), just
apply the transfer function $\tau$ to compute the possible stack
states of the subsequent instruction. Note that the expression
$pc+size(b_{pc})$ at (3) just computes the position of the next
instruction in the \EVM program.
Jumping instructions, points (1) and (2), compute the initial state of
the invoked blocks, thus they produce a map with all possible input
stack states that can reach one block.  \code{JUMP} and \code{JUMPI}
instructions produce, for each stack state, one equation by taking the
element from the previous stack state $\eqn_{\sigma(s_{n-1})}$.
\code{JUMPI}, point (2), produces an extra equation $\eqn_{pc+1}$ to
capture the possibility of continuing to the next instruction instead
of jumping to the destination address. Additionally, those
instructions before \code{JUMPDEST}, point (3), produce initial states
for the block that starts in the \code{JUMPDEST}.
When the constraint equation system is solved, constraint variables
over-approximate the jumping information for the program.

\begin{example}
\label{ex:transfer}

As it can be seen in Figure~\ref{fig:cfg-ins}, we can jump to block 64B from two
different blocks, \hdir{941} and \hdir{123}. The computation of the jump
equations systems will produce the following equations for the entry program
points of these two blocks:

{
\small
\noindent
\begin{minipage}{8cm}
\[
\begin{array}{rcl}
\eq{941} & 
\sqsupseteq & 
\fstate{8,\set{}}{8,\set{}}
\\ 
& \vdots 
\\
\eq{950} & 
\sqsupseteq & 
\fstate{8,\set{}}{8,\{s_5 \mapsto \hdir{954}, s_8 \mapsto \hdir{64B}\}}
\\ 

\eq{64B}^{\anno{1}} & 
\sqsupseteq & 
\fstate{7,\set{s_5 \mapsto \hdir{954}}}{7,\set{s_5 \mapsto \hdir{954}}}
\end{array}
\]
\end{minipage}
\begin{minipage}{7cm}
\[
\begin{array}{rcl}
\eq{123} & 
\sqsupseteq & 
\fstate{2,\set{}}{2,\{\}}
\\ 
& \vdots
\\
\eq{318} & 
\sqsupseteq & 
\fstate{2,\set{}}{4,\{s_1 \mapsto \hdir{142}, s_3 \mapsto \hdir{64B}\}}
\\

\eq{64B}^{\anno{2}} & 
\sqsupseteq & 
\fstate{3,\set{s_1 \mapsto \hdir{142}}}{3,\set{s_1 \mapsto \hdir{142}}}

\end{array}
\]
\end{minipage}
}

\bigskip
\noindent Observe that we have two different stack contents
reaching the same program point, e.g. two equations for $\eq{64B}$ are
produced by two different blocks, the \code{JUMP} at the end of block
\hdir{941}, identified by $\eq{64B}^\anno{1}$, and the \code{JUMP} at
the end of block 123, identified by $\eq{64B}^\anno{2}$.
Thus the equation that must hold for
$\eq{64B}$ is produced by the application of the operation $\eq{64B}^\anno{1}
\lub \eq{64B}^\anno{2}$,
as follows:

{\small
\[
\eq{64B} \sqsupseteq 
\{
\tuple{7,\set{s_5 \mapsto \hdir{954}}} \mapsto \tuple{7,\set{s_5 \mapsto \hdir{954}}},
\tuple{3,\set{s_1 \mapsto \hdir{142}}} \mapsto \tuple{3,\set{s_1 \mapsto \hdir{142}}}
\}
\]
}
Note that the application of the transfer function $\tau$ for all instructions
of block 64B applies function $\lambda$ to all elements in the abstract state
and updates the stack state accordingly

\begin{center}
{\small
\[
\begin{array}{rrcll}  
(\text{\code{JUMPDEST}}) & \eq{64B} & \sqsupseteq &
\{ 		
\tuple{7,\set{s_5 \mapsto \hdir{954}}} \mapsto \tuple{7,\set{s_5 \mapsto \hdir{954}}}, &
\tuple{3,\set{s_1 \mapsto \hdir{142}}} \mapsto \tuple{3,\set{s_1 \mapsto \hdir{142}}} 
\}
\\

(\text{\code{PUSH1 00}}) & \eq{64C} &	 	\sqsupseteq &
\{ 		
\tuple{7,\set{s_5 \mapsto \hdir{954}}} \mapsto \tuple{8,\set{s_5 \mapsto \hdir{954}}},&
\tuple{3,\set{s_1 \mapsto \hdir{142}}} \mapsto \tuple{4,\set{s_1 \mapsto \hdir{142}}}
\}
\\

(\text{\code{DUP1}}) & \eq{64E} & \sqsupseteq & 			
\{ 		
\tuple{7,\set{s_5 \mapsto \hdir{954}}} \mapsto \tuple{9,\set{s_5 \mapsto \hdir{954}}}, &
\tuple{3,\set{s_1 \mapsto \hdir{142}}} \mapsto \tuple{5,\set{s_1 \mapsto \hdir{142}}}   
\}
\\

(\text{\code{PUSH1 00}}) & \eq{64F} & \sqsupseteq & 	
\{ 		
\tuple{7,\set{s_5 \mapsto \hdir{954}}} \mapsto \tuple{10,\set{s_5 \mapsto \hdir{954}}}, &
\tuple{3,\set{s_1 \mapsto \hdir{142}}} \mapsto \tuple{6,\set{s_1 \mapsto \hdir{142}}}
\}
\\

(\text{\code{SWAP1}}) & \eq{651} & \sqsupseteq &		
\{ 		
\tuple{7,\set{s_5 \mapsto \hdir{954}}} \mapsto \tuple{10,\set{s_5 \mapsto \hdir{954}}}, &
\tuple{3,\set{s_1 \mapsto \hdir{142}}} \mapsto \tuple{6,\set{s_1 \mapsto \hdir{142}}}
\}
\\

(\text{\code{POP}}) & \eq{652} & \sqsupseteq &			
\{ 		
\tuple{7,\set{s_5 \mapsto \hdir{954}}} \mapsto \tuple{9,\set{s_5 \mapsto \hdir{954}}}, &
\tuple{3,\set{s_1 \mapsto \hdir{142}}} \mapsto \tuple{5,\set{s_1 \mapsto \hdir{142}}}
\}
\\
\end{array}
\]	
}
\end{center}
\exampleend
\end{example}

Solving the addresses equation system of a program $P$ can be done iteratively.
A na\"{\i}ve algorithm consists in first creating one constraint variable
$\eqn_0 \sqsupseteq \pi_{\emptyset}[\tuple{0,\sigma_{\emptyset}} \mapsto
\set{\tuple{0,\sigma_{\emptyset}}}]$, where $\pi_{\emptyset}$ and
$\sigma_{\emptyset}$ are empty mappings, and
$\eqn_{pc} \sqsupseteq \ptabstractbot$ for all $pc \in P, pc \not=0$, and then
iteratively refining the values of these variables as follows:
\begin{enumerate}
\item substitute the current values of the constraint variables in
  the right-hand side of each constraint, and then evaluate the right-hand 
  side if needed;

\item if each constraint $\eqn \sqsupseteq E$ holds, where $E$ is the value
  of the evaluation of the right-hand side of the previous step, then
  the process finishes; otherwise

\item for each $\eqn \sqsupseteq E$ which does not hold, let $E'$ be the
  current value of $\eqn$. Then update the current value of $\eqn$ to $E \lub
  E'$.
  Once all these updates are (iteratively) applied we repeat the process at 
  step 1.
\end{enumerate}

\noindent
Termination is guaranteed since the abstract domain does not have infinitely
ascending chains as the number of jump destinations and the stack size are
finite.
\mycomment{This is the case of the programs that satisfy the
  constraints stated in Section~\ref{sec:evm-language}.}


\begin{figure}[t]
\noindent
{\small
\[
\begin{array}{rclll}

\hspace{-0,2cm}\anno{A}
\eq{64B} & 
\sqsupseteq & 
\{
\tuple{7,\set{s_5 \mapsto \hdir{954}}} \mapsto 
    \tuple{7,\set{s_5 \mapsto \hdir{954}}}, & 
\tuple{3,\set{s_1 \mapsto \hdir{142}}} \mapsto 
    \tuple{3,\set{s_1 \mapsto \hdir{142}}} 
&\}
\\
& \vdots 
\\
\eq{652} & 
\sqsupseteq & 
\{
\tuple{7,\set{s_5 \mapsto \hdir{954}}} \mapsto 
    \tuple{9,\set{s_5 \mapsto \hdir{954}}}, &
\tuple{3,\set{s_1 \mapsto \hdir{142}}} \mapsto 
    \tuple{5,\set{s_1 \mapsto \hdir{142}}} 
&\}
\\
\hspace{-0,2cm}\anno{A}
\eq{653} & 
\sqsupseteq & 
\{
\tuple{9,\set{s_5 \mapsto \hdir{954}}} \mapsto 
    \tuple{9,\set{s_5 \mapsto \hdir{954}}}, &
\tuple{5,\set{s_1 \mapsto \hdir{142}}} \mapsto 
     \tuple{5,\set{s_1 \mapsto \hdir{142}}} 
&\}
\\ 
& \vdots \\
\eq{660} & 
\sqsupseteq & 
\{
\tuple{9,\set{s_5 \mapsto \hdir{954}}} \mapsto 
   \tuple{11,\set{s_5 \mapsto \hdir{954}, s_{10} \mapsto \hdir{6D0}}}, &
\tuple{5,\set{s_1 \mapsto \hdir{142}}} \mapsto 
    \tuple{7,\set{s_1 \mapsto \hdir{142}, s_6 \mapsto \hdir{6D0}}}
&\}
\\
\hspace{-0,2cm}\anno{A}
\eq{661} & 
\sqsupseteq & 
\{
\tuple{9,\set{s_5 \mapsto \hdir{954}}} \mapsto 
     \tuple{9,\set{s_5 \mapsto \hdir{954}}}, & 
\tuple{5,\set{s_1 \mapsto \hdir{142}}} 
     \mapsto \tuple{5,\set{s_1 \mapsto \hdir{142}}} 
&\}
\\
\eq{6D0} & 
\sqsupseteq & 
\{
\tuple{9,\set{s_5 \mapsto \hdir{954}}} \mapsto 
     \tuple{9,\set{s_5 \mapsto \hdir{954}}}, & 
\tuple{5,\set{s_1 \mapsto \hdir{142}}} \mapsto 
     \tuple{5,\set{s_1 \mapsto \hdir{142}}} 
&\}
\\
& \vdots \\
\eq{66D} & 
\sqsupseteq & 
\{
\tuple{9,\set{s_5 \mapsto \hdir{954}}} \mapsto 
     \tuple{13,\set{s_5 \mapsto \hdir{954},s_{12} \mapsto \hdir{66F}}}, &
\tuple{5,\set{s_1 \mapsto \hdir{142}}} \mapsto 
     \tuple{9,\set{s_1\mapsto \hdir{142},s_{10} \mapsto \hdir{66F}}}
&\}
\\
\eq{66E} & 
\sqsupseteq & 
\{
\tuple{11,\set{s_5 \mapsto \hdir{954}}} \mapsto 
     \tuple{11,\set{s_5 \mapsto \hdir{954}}}, &
\tuple{7,\set{s_1\mapsto \hdir{142}}} \mapsto 
     \tuple{7,\set{s_1\mapsto \hdir{142}}}
&\}
\\
\hspace{-0,2cm}\anno{A}
\eq{66F} & 
\sqsupseteq & 
\{
\tuple{11,\set{s_5 \mapsto \hdir{954}}} \mapsto 
     \tuple{11,\set{s_5 \mapsto \hdir{954}}}, &
\tuple{7,\set{s_1\mapsto \hdir{142}}} \mapsto 
     \tuple{7,\set{s_1\mapsto \hdir{142}}}
&\}
\\

& \vdots \\
\eq{682} & 
\sqsupseteq & 
\{
\tuple{11,\set{s_5 \mapsto \hdir{954}}} \mapsto 
    \tuple{11,\{s_5 \mapsto \hdir{954},s_{10} \mapsto \hdir{6C3}}, &
\tuple{7,\set{s_1\mapsto \hdir{142}}} \mapsto 
    \tuple{7,\{s_1 \mapsto \hdir{142}, s_6 \mapsto \hdir{6C3}\}}
&\}

\\
\hspace{-0,2cm}\anno{A}
\eq{6C3} & 
\sqsupseteq &
\{ 
\tuple{9,\set{s_5 \mapsto \hdir{954}}} \mapsto 
    \tuple{9,\set{s_5 \mapsto \hdir{954}}}, &
\tuple{5,\set{s_1\mapsto \hdir{142}}} \mapsto 
    \tuple{5,\set{s_1 \mapsto \hdir{142}}}
&\}
\\
& \vdots 
\\
\eq{6CF} & 
\sqsupseteq & 
\{
\tuple{9,\set{s_5 \mapsto \hdir{954}}} \mapsto 
     \tuple{10,\{s_5 \mapsto \hdir{954},s_{9} \mapsto \hdir{653}}, &
\tuple{5,\set{s_1\mapsto \hdir{142}}} \mapsto 
     \tuple{6,\{s_1 \mapsto \hdir{142}, s_5 \mapsto \hdir{653}\}}
&\}

\\[0.2cm]

\hspace{-0,2cm}\anno{A}
\eq{683} & 
\sqsupseteq & 
\{
\tuple{9,\set{s_5 \mapsto \hdir{954}}} \mapsto 
    \tuple{9,\set{s_5 \mapsto \hdir{954}}}, &
\tuple{5,\set{s_1\mapsto \hdir{142}}} \mapsto 
    \tuple{5,\set{s_1 \mapsto \hdir{142}}}
&\}
\\ 
& \vdots \\
\eq{68F} & 
\sqsupseteq & 
\{
\tuple{9,\set{s_5 \mapsto \hdir{954}}} \mapsto 
    \tuple{13,\{s_5 \mapsto \hdir{954},s_{12} \mapsto \hdir{691}}, &
\tuple{5,\set{s_1\mapsto \hdir{142}}} \mapsto 
    \tuple{9,\{s_1 \mapsto \hdir{142}, s_{8} \mapsto \hdir{691}\}}
&\}
\\
\hspace{-0,2cm}\anno{A}
\eq{690} & 
\sqsupseteq & 
\{
\tuple{11,\set{s_5 \mapsto \hdir{954}}} \mapsto 
    \tuple{11,\{s_5 \mapsto \hdir{954}}, &
\tuple{7,\set{s_1\mapsto \hdir{142}}} \mapsto 
    \tuple{7,\{s_1 \mapsto \hdir{142}\}}
&\}
\\
\hspace{-0,2cm}\anno{A}
\eq{691} & 
\sqsupseteq & 
\{
\tuple{11,\set{s_5 \mapsto \hdir{954}}} \mapsto 
    \tuple{11,\{s_5 \mapsto \hdir{954}}, &
\tuple{7,\set{s_1\mapsto \hdir{142}}} \mapsto 
    \tuple{7,\{s_1 \mapsto \hdir{142}\}}
&\}

\\
& \vdots
\\
\eq{6C2} & 
\sqsupseteq & 
\{
\tuple{11,\set{s_5 \mapsto \hdir{954}}} \mapsto 
      \tuple{10,\{s_5 \mapsto \hdir{954},s_{9} \mapsto \hdir{6D1}}, &
\tuple{7,\set{s_1\mapsto \hdir{142}}} \mapsto      
      \tuple{6,\{s_1 \mapsto \hdir{142}, s_{5} \mapsto \hdir{6D1}\}}
&\}
\\
\hspace{-0,2cm}\anno{A}
\eq{6D1} & 
\sqsupseteq & 
\{
\tuple{9,\{s_5 \mapsto \hdir{954}\}} \mapsto 
    \tuple{9,\{s_5 \mapsto \hdir{954}\}}, &
\tuple{5,\{s_1 \mapsto \hdir{142}\}} \mapsto 
    \tuple{5,\{s_1 \mapsto \hdir{142}\}}
&\}

\\
\eq{6D2} & 
\sqsupseteq & 
\{
\tuple{9,\{s_5 \mapsto \hdir{954}\}} \mapsto 
    \tuple{8,\{s_5 \mapsto \hdir{954}\}}, &
\tuple{5,\{s_1 \mapsto \hdir{142}\}} \mapsto     
    \tuple{4,\{s_1 \mapsto \hdir{142}\}}
&\}
\\
\eq{6D3} & 
\sqsupseteq & 
\{
\tuple{9,\{s_5 \mapsto \hdir{954}\}} \mapsto 
    \tuple{8,\{s_7 \mapsto \hdir{954}\}}, &
\tuple{5,\{s_1 \mapsto \hdir{142}\}} \mapsto 
    \tuple{4,\{s_3 \mapsto \hdir{142}\}}
&\}
\\
\eq{6D4} & 
\sqsupseteq & 
\{
\tuple{9,\{s_5 \mapsto \hdir{954}\}} \mapsto 
    \tuple{8,\{s_6 \mapsto \hdir{954}\}}, &
\tuple{5,\{s_1 \mapsto \hdir{142}\}} \mapsto 
    \tuple{4,\{s_2 \mapsto \hdir{142}\}}
&\}
\\
\eq{6D5} & 
\sqsupseteq & 
\{
\tuple{9,\{s_5 \mapsto \hdir{954}\}} \mapsto 
    \tuple{7,\{s_6 \mapsto \hdir{954}\}}, &
\tuple{5,\{s_1 \mapsto \hdir{142}\}} \mapsto 
    \tuple{3,\{s_2 \mapsto \hdir{142}\}}
&\}
\\
\hspace{-0,2cm}\anno{B}
\eq{954} & 
\sqsupseteq & 
\{
\tuple{6,\{\}} \mapsto \tuple{6,\{\}}
&&\}
\\
& \vdots 
\\
\hspace{-0,2cm}\anno{B}
\eq{142} & 
\sqsupseteq & 
\{
\tuple{2,\{\}} \mapsto \tuple{2,\{\}}
&&\}
\\
& \vdots 
\end{array}
\]
}
\caption{Jumps equations system of \code{__callback} function}
\label{fig:eqsystem}
\end{figure}

\begin{example}

Figure~\ref{fig:eqsystem} shows the equations produced by
Definition~\ref{def:jumpeq} of the first and the last instruction of all blocks
shown in Figure~\ref{fig:cfg-ins}.
The first instruction shown in the system is $\eq{64B}$, computed in
Example~\ref{ex:transfer}.
Observe that the application of $\tau$ stores the jumping addresses in the
corresponding abstract states after \code{PUSH} instructions (see $\eq{660}$,
$\eq{66D}$, $\eq{6CF}$, $\eq{68C}$, \ldots). Such addresses will be used to
produce the equations at the \code{JUMP} or \code{JUMPI} instructions.
In the case of \code{JUMP}, as the jump is unconditional, it only produces one
equation, e.g. $\eq{66E}$ consumes address \hdir{66F} to produce the input state
of $\eq{66F}$, or $\eq{6C2}$ produces the input abstract state for $\eq{6D1}$.
\code{JUMPI} instructions produce two different equations: (1) one equation
which corresponds to the jumping address stored in the stack, e.g. equations
$\eq{6D0}$ and $\eq{66F}$ produced by the jumps of the equations $\eq{660}$ and
$\eq{66D}$ respectively; and (2) one equation which corresponds to the next
instruction, e.g. $\eq{661}$ and $\eq{66E}$ produced by $\eq{660}$ and
$\eq{66D}$, respectively.
Finally, another point to highlight occurs at equation $\eq{6D5}$: as we have
two possible jumping addresses in the stack of and both can be used by the
\code{JUMP} at the end of the block, we produce two inputs for the two
possible jumping addresses, $\eq{954}$ and $\eq{142}$, for capturing the two
possible branches from block \hdir{6D1} (see Figure~\ref{fig:cfg-ins}).
\exampleend
\end{example}

\begin{theorem}[Soundness of the addresses equation system]
  \label{theorem:eqsystem}
  Let $P \equiv b_0,\dots,b_p$ be a program, $\eqn_1,\dots,\eqn_n$ the
  solution of the jumps equations system of $P$, and $pc$ the program
  counter of a jump instruction. Then for any execution trace $t$ of $P$, there
  exists $s \in dom(\eqn_{pc})$ such that
  $ \tuple{n, \sigma} \in \eqn_{pc}(s)$ and $\sigma(s_{n-1})$ contains
  all jump addresses that instruction $b_{pc}$ jumps to in $t$.
\end{theorem}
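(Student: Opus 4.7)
The plan is to prove soundness by induction on the length of the execution trace $t$. I would first fix an abstraction relation $\preceq$ between a concrete stack state $\tuple{n_c, \sigma_c}$ (where $\sigma_c$ assigns the single jump address actually residing at a position, when any) and an abstract stack state $\tuple{n_a, \sigma_a}$: write $\tuple{n_c, \sigma_c} \preceq \tuple{n_a, \sigma_a}$ iff $n_c = n_a$ and for every $s_i \in dom(\sigma_a)$ we have $\sigma_c(s_i) \in \sigma_a(s_i)$. Positions outside $dom(\sigma_a)$ impose no obligation, matching the fact that the analysis is allowed to forget non-address words. The main invariant I would use is: for every prefix $S_0 \rrderivproof \cdots \rrderivproof S_k$ of $t$ with $S_k = \tuple{pc, c_k}$, there exist an entry $s \in dom(\eqn_{pc})$ and an image $a_k \in \eqn_{pc}(s)$ with $c_k \preceq a_k$. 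The theorem is the specialisation of the invariant to program points where $b_{pc}$ is a jump: from $c_k \preceq a_k$ we get $\sigma_c(s_{n-1}) \in \sigma_a(s_{n-1})$, and $\sigma_a(s_{n-1})$ is exactly what $\eqn_{pc}$ exposes.

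The first step is an auxiliary simulation lemma for $\lambda$: whenever $b_{pc}$ takes a concrete step $c \rrderivproof c'$ by one of the rules of Figure~\ref{fig:jumpsemantics} and $c \preceq a$, then $c' \preceq \lambda(b_{pc}, a)$. I would prove this by mechanical case analysis, matching each semantic rule with its abstract counterpart in Figure~\ref{fig:transfer}. Rules (4) and (5) are immediate since PUSH creates an entry either inside $\mathcal{J}$ (which $\lambda$ records) or outside (which neither side needs). The DUP rules (6)--(7) and SWAP rules (8)--(11) split exactly on the membership of $s_{n-1}$ and $s_{n-x-1}$ in $dom(\sigma_a)$; in every sub-case the image of $\lambda$ contains the copied or swapped set, so the inclusion is preserved. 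The generic rule (12) pops $\delta$ entries from $\sigma_a$, which only weakens the obligation on $\sigma_c$.

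With the lemma in hand, the main induction proceeds as follows. The base case is $S_0 = \tuple{0, \tuple{0, \sigma_\emptyset}}$, witnessed by the seed equation $\eqn_0 \sqsupseteq \pi_\emptyset[\tuple{0, \sigma_\emptyset} \mapsto \set{\tuple{0, \sigma_\emptyset}}]$ introduced by the fixpoint algorithm. For the step, let $S_k$ be witnessed by $(s, a_k)$ with $a_k \in \eqn_{pc}(s)$, and split on $b_{pc}$. Case (A), $b_{pc}$ neither a jump nor followed by \code{JUMPDEST}: clause (4) of Definition~\ref{def:jumpeq} gives $\eqn_{pc+size(b_{pc})} \sqsupseteq \tau(b_{pc}, \eqn_{pc})$, and since $\tau$ acts pointwise on entry keys, $\lambda(b_{pc}, a_k) \in \eqn_{pc+size(b_{pc})}(s)$; the lemma closes the case. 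Case (B), a non-jump followed by \code{JUMPDEST} or a \code{JUMPI} falling through: clause (3) of Definition~\ref{def:jumpeq} (respectively the second line of clause (2)) installs the fresh entry $s' = \lambda(b_{pc}, a_k)$ in $\eqn_{pc'}$ via $idmap$, and the invariant reboots with witness $(s', s')$. Case (C), \code{JUMP} or \code{JUMPI} jumping to $\sigma_c(s_{n-1})$: from $c_k \preceq a_k$ we have $\sigma_c(s_{n-1}) \in \sigma_a(s_{n-1})$, so the universally quantified clause (1) (respectively the first line of (2)), instantiated at this particular element, contributes $\eqn_{\sigma_c(s_{n-1})} \sqsupseteq idmap(\lambda(b_{pc}, a_k))$, supplying a fresh witness at the correct destination block.

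The main obstacle will be Case (C) together with the bookkeeping around $idmap$. One has to argue that the universally quantified right-hand sides ``$\forall s \in dom(\eqn_{pc}), \tuple{n,\sigma} \in \eqn_{pc}(s)$'' of clauses (1)--(3) really do generate an equation for the very element used by the inductive hypothesis, and that the fresh entry installed at the target block is then propagated unchanged by $\tau$ while execution remains inside that block. A secondary subtlety lies in the asymmetric DUP/SWAP rules (7) and (9)--(11), where $\lambda$ deliberately drops a position: the abstraction $\preceq$ must be phrased so that removing an entry from $\sigma_a$ only weakens, never violates, the relation. Once these two points are in place, the remainder is routine.
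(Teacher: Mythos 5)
Your overall architecture coincides with the paper's: an induction on the length of the trace establishing that every reachable state at $pc$ is covered by some $s \in dom(\eqn_{pc})$, with a case analysis that correctly pairs each semantic rule with the corresponding clause of Definition~\ref{def:jumpeq} (clause (4) inside a block, clause (3) and the fall-through line of clause (2) at block boundaries via $idmap$, clause (1) and the first line of clause (2) at jumps), and the theorem obtained as the specialisation of that invariant to jump instructions. The one structural difference is that the paper does not need your abstraction relation $\preceq$: its operational semantics in Figure~\ref{fig:jumpsemantics} already manipulates stack states $\tuple{n,\sigma}$ of exactly the same shape as the analysis elements, so Lemma~\ref{lemma1} asserts literal membership $\tuple{n,\sigma}\in\eqn_{pc}(s)$ and your auxiliary simulation lemma for $\lambda$ collapses into the observation that each semantic rule and the matching case of $\lambda$ compute the same successor. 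Your two-level formalisation (a genuinely concrete semantics related to the analysis by $\preceq$) is a legitimate, arguably more standard alternative.

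There is, however, a genuine gap: your relation $\preceq$ is quantified in the wrong direction, and this breaks the last step. You require $\sigma_c(s_i)\in\sigma_a(s_i)$ only for $s_i\in dom(\sigma_a)$ and explicitly state that positions outside $dom(\sigma_a)$ impose no obligation. Under that definition $\tuple{n,\sigma_c}\preceq\tuple{n,\sigma_{\emptyset}}$ holds for every concrete stack of height $n$, so the invariant is satisfiable by trivially small abstract states; at a \code{JUMP} you then cannot conclude that $s_{n-1}\in dom(\sigma_a)$, hence cannot conclude that $\sigma_a(s_{n-1})$ contains the actual target, which is precisely the theorem's claim. The obligation must flow from concrete to abstract: every position at which the concrete stack actually holds a jump destination must belong to $dom(\sigma_a)$, with the held address contained in the abstract set there (i.e.\ $dom(\sigma_c)\subseteq dom(\sigma_a)$ plus pointwise containment on $dom(\sigma_c)$). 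With that correction your simulation lemma still goes through --- \code{PUSH} of $v\in\mathcal{J}$ always installs the entry, \code{DUP}/\code{SWAP} propagate it, and rule (12) only discards popped positions --- and the remainder of your argument, including the instantiation of the universally quantified right-hand sides at the particular element supplied by the induction hypothesis and the $idmap$ reboot of the entry key at block boundaries, is sound and matches the paper's proof of Lemma~\ref{lemma1}.
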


We follow the next steps to prove the soundness of this theorem:

\begin{enumerate}
\item We first define an \EVM collecting semantics for the operational
  semantics of Figure~\ref{fig:jumpsemantics}. Such collecting
  semantics gathers all transitions that can be produced by the
  execution of a program $P$.
\item We continue by defining the jumps-to property as a property of
  this collecting semantics.
\item Then we prove Lemma~\ref{lemma1} below that states that the
  least solution of the addresses equation system generated from the
  \EVM program as described in Definition~\ref{def:jumpeq} is a safe
  approximation of the \EVM collecting semantics w.r.t. the jumps-to
  property.
\item Finally, Theorem~\ref{theorem:eqsystem} trivially follows from
  Lemma~\ref{lemma1}.
\end{enumerate}

\newcommand{\colsemp}{\ensuremath{\mathcal{C}_P}}
\begin{definition}[\EVM collecting semantics]
  Given an \EVM program $P$, the \EVM collecting semantics operator
  $\colsemp$ is defined as follows:
\[
\colsemp (X) = \set{\tuple{S,S'} ~|~ \tuple{\_,S} \in X \wedge S \rrderivproof S'}
\]
The \EVM semantics is defined as $\xi_P = \bigcup_{n>0}
\colsemp^n(X_0)$,  where $X_0 \equiv \set{\tuple{0,\tuple{0,\sigma_{\emptyset}}}}$ is the initial configuration.

\end{definition}

\begin{definition}[jumps-to property]
  Let $P$ be an IR program, $\xi_P = \bigcup_{n>0} \colsemp^n(X_0)$,
  and $b$ an instruction at program point $pc$, then we say
  that $\xi_P \vDash_{pc} T$ ~if~
  $T = \set{ \tuple{n,\sigma} ~|~ \tuple{S,S'}\in\xi_P \wedge
    \tuple{pc,\tuple{n,\sigma}} \in S'}$.

\end{definition}

The following lemma states that the least solution of the constraint
equation system defined in Definition~\ref{eq:equation_system}
is a safe approximation of $\xi_P$:

\begin{lemma}
\label{lemma1}
  Let $P \equiv b_0,\dots, b_p$ be a program, $pc$ a program point and
  $\eqn_{0},\dots, \eqn_{p}$ the least solution of the constraints
  equation system of Definition~\ref{def:jumpeq}.
  The following holds:

  \begin{center}
    If $\xi_P \vDash_{pc} T$, then for all $\tuple{n,\sigma}\in
  T$, exists $s \in dom(\eqn_{pc})$ such that $\tuple{n,\sigma} \in\eqn_{pc}(s)$.
  \end{center}
\end{lemma}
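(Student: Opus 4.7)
The plan is to prove the lemma by induction on $k$, the number of steps in $\colsemp^k(X_0)$. The base case $k=0$ follows directly from the initialization clause, which sets $\eqn_0 \sqsupseteq \pi_{\emptyset}[\tuple{0,\sigma_{\emptyset}} \mapsto \set{\tuple{0,\sigma_{\emptyset}}}]$ and therefore captures the unique initial configuration in $X_0$. For the inductive step, any pair $\tuple{S',S''} \in \colsemp^{k+1}(X_0)$ factors as some $\tuple{S,S'} \in \colsemp^{k}(X_0)$ followed by one transition $S' \rrderivproof S''$. Writing $S' = \tuple{pc,\tuple{n,\sigma}}$ and $S'' = \tuple{pc'',\tuple{n'',\sigma''}}$, the induction hypothesis yields some $s \in dom(\eqn_{pc})$ with $\tuple{n,\sigma} \in \eqn_{pc}(s)$, and it remains to exhibit $s'' \in dom(\eqn_{pc''})$ with $\tuple{n'',\sigma''} \in \eqn_{pc''}(s'')$.

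The core of the argument is then a case analysis on the instruction $b_{pc}$, matched against the five cases of Definition~\ref{def:jumpeq}. For non-jump instructions (cases (3), (4), (5) of Definition~\ref{def:jumpeq}), I rely on a straightforward single-step soundness property of the updating function: each operational rule (4)--(12) of Figure~\ref{fig:jumpsemantics} is mirrored by a clause of $\lambda$, so $\lambda(b_{pc},\tuple{n,\sigma}) = \tuple{n'',\sigma''}$. Lifting pointwise, $\tau(b_{pc},\eqn_{pc})(s)$ contains $\tuple{n'',\sigma''}$, and the equation $\eqn_{pc''} \sqsupseteq \tau(b_{pc},\eqn_{pc})$ with $pc'' = pc + size(b_{pc})$ delivers the inclusion, with the same key $s$ reused at the destination.

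The delicate part is the jump cases (operational rules (1)--(3) of Figure~\ref{fig:jumpsemantics}, corresponding to cases (1) and (2) of Definition~\ref{def:jumpeq}), where the destination $pc''$ is itself read from the runtime stack as $\sigma(s_{n-1})$ and hence depends on the specific concrete state. The key observation is that the equation scheme universally quantifies over all pairs $\tuple{n,\sigma} \in \eqn_{pc}(s)$ and, for each such pair, produces an equation indexed by its own $\sigma(s_{n-1})$. Thus the equation $\eqn_{\sigma(s_{n-1})} \sqsupseteq \text{\textit{idmap}}(\lambda(b_{pc},\tuple{n,\sigma}))$ lands in precisely the constraint variable corresponding to the concrete destination taken at runtime. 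The $\text{\textit{idmap}}$ wrapper re-roots the incoming state as a fresh key in the destination's mapping, matching how subsequent transfers at the target block expect to see the entry stack. The fall-through sub-equation of case (2) handles operational rule (3) for \code{JUMPI} analogously, and case (3) of Definition~\ref{def:jumpeq} takes care of the implicit edge into a following \code{JUMPDEST} by providing a fresh starting state for that block.

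Once this single-step soundness is verified across all instruction families, the induction yields the lemma, and Theorem~\ref{theorem:eqsystem} follows by specializing to a jump instruction and observing that the relevant abstract stack top collects all concrete jump destinations used in the trace at that program point.
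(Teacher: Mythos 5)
Your proposal is correct and follows essentially the same route as the paper's proof: induction on the length of the trace underlying $\colsemp^k(X_0)$, with a case split between non-jump instructions (where $\tau$ lifts $\lambda$ pointwise and the key $s$ is preserved) and jump/\code{JUMPDEST}-entry cases (where \textit{idmap} re-roots the state as a fresh key at the destination variable). The only cosmetic difference is that you argue directly against the least solution, whereas the paper phrases the same argument in terms of the iterates $\eqn_{pc}^{m}$ of the solving algorithm.
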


\begin{proof}
We use $\eqn_{pc}^m$ to refer to the value obtained for $\eqn_{pc}$
after $m$ iterations of the algorithm for solving the equation system
depicted in Section~\ref{sec:from-evm-to-cfg}.
We say that $\eqn_{pc}$
\emph{covers} $\tuple{n,\sigma}$ in $\colsemp^m(X_0)$ at program point
$pc$ when this lemma holds for the result of computing
$\colsemp^m(X_0)$.
In order to prove this lemma, we can reason by induction on
the value of $m$, the length of the traces $S_0 \rrderivproof^m S_m$ considered
in $\colsemp^m(X_0)$.

\textbf{Base case}: if $m = 0$,
$S_0 = \tuple{0,\tuple{0, \sigma_{\emptyset}}}$ and the Lemma
trivially holds as 
$\tuple{0, \sigma_{\emptyset}} \in \eqn_0^0(\tuple{0,
  \sigma_{\emptyset}})$.

\textbf{Induction Hypothesis}: we assume Lemma~\ref{lemma1} holds for all
traces of length $m \geq 0$.

\textbf{Inductive Case}: Let us consider traces of length $m + 1$,
which are of the form $S_0 \rrderivproof^m S_m \rrderivproof S_{m+1}$. $S_m$ is
a program state of the form $S_m = \tuple{pc, \tuple{n, \sigma}}$.  We
can apply the induction hypothesis to $S_m$: there exists some
$s\in dom(\eqn_{pc}^m)$ such that
$ \tuple{n, \sigma} \in \eqn_{pc}^{m}(s) $. For extending the Lemma, we
reason for all possible rules in the simplified \EVM semantics
(Fig.~\ref{EVM_semantics}) we may apply from $S_m$ to $S_{m+1}$:

\begin{itemize}
\item{Rule (1)}: After executing a \code{JUMP} instruction $S_{m+1}$
  is of the form
  $\tuple{\sigma(s_{n-1}), \tuple{n-1,
      \sigma\backslash[s_{n-1}]}}$.
  In iteration $m+1$, the following set of equations corresponding to
  $b_{pc}$ is evaluated:
\[ 
\begin{array}{rcll}
\eqn_{\sigma(s_{n-1})} & \sqsupseteq & \textit{idmap}(\lambda(b_{pc},
\tuple{n', \sigma'})) & \forall s'\in dom(\eqn_{pc}), \tuple{n', \sigma'} \in \eqn_{pc}(s')
\end{array}
\]
where
$\textit{idmap}(\lambda(b_{pc}, \langle n', \sigma'
\rangle)) = \pi_{\bot}[ \tuple{n'-1,\sigma'\backslash[s_{n-1}]} \mapsto
\set{\tuple{n'-1,\sigma'\backslash[s_{n-1}]}}]$ (Case (4) in Fig.~\ref{fig:transfer}).
The induction hypothesis guarantees that there exists some $s'' \in
\eqn_{pc}^{m}$ such that 
$ \tuple{n, \sigma} \in \eqn_{pc}^{m}(s'') $, where $S_m = \tuple{pc,
  \tuple{n, \sigma}}$.  Therefore, at
Iteration $m+1$, the following must hold:
\[ 
\eqn_{\sigma(s_{n-1})}^{m+1} \sqsupseteq
\pi_{\bot}[ \tuple{n-1,\sigma\backslash[s_{n-1}]} \mapsto \set{\tuple{n-1,\sigma\backslash[s_{n-1}]}}]
\]
so $\tuple{n-1,\sigma\backslash[s_{n-1}]} \in
\eqn_{\sigma(s_{n-1})}^{m+1}(\tuple{n-1,\sigma\backslash[s_{n-1}]})$
and thus Lemma~\ref{lemma1} holds.

\item{Rules (2) and (3)}: After executing a \code{JUMPI} instruction,
  $S_{m+1}$ is either
  $\tuple{\sigma(s_{n-1}), \tuple{n-2, \sigma\backslash[s_{n-1},
      s_{n-2}]}}$
  or $\tuple{pc+size(b_{pc}), \tuple{n-2, \sigma\backslash[s_{n-1}, s_{n-2}]}}$,
  respectively.  In any of those cases the following sets of equations
  are evaluated:
\[
\begin{array}{rcll}
\eqn_{\sigma(s_{n-2})} & \sqsupseteq & \textit{idmap}(\lambda(\text{\code{JUMPI}},
\langle n', \sigma' \rangle)) & \forall s'\in dom(\eqn_{pc}), \tuple{n', \sigma'} \in \eqn_{pc}(s')
\\
\eqn_{pc+1} & \sqsupseteq & \textit{idmap}(\lambda(\text{\code{JUMPI}},
\langle n', \sigma' \rangle)) & \forall s'\in dom(\eqn_{pc}), \tuple{n', \sigma'} \in \eqn_{pc}(s')
\end{array}
\]
where\\
$\textit{idmap}(\lambda(b_{pc}, \langle n', \sigma' \rangle)) =
\pi_{\bot}[ \tuple{n'-2,\sigma'\backslash[s_{n-1},s_{n-2}]} \mapsto
\set{\tuple{n'-2,\sigma'\backslash[s_{n-1},s_{n-2}]}}]$
(Case (4) of the definition of the update function $\lambda$ in
Fig.~\ref{fig:transfer}). 
As in the previous case, the induction hypothesis guarantees that at
Iteration $m$ there exists $s'' \in
\eqn_{pc}^{m}$ such that 
$ \tuple{n, \sigma} \in \eqn_{pc}^{m}(s'') $. Therefore, in 
Iteration $m+1$, the following must hold:
\[ 
\begin{array}{rcl}
\eqn_{\sigma(s_{n-1})}^{m+1} & \sqsupseteq &
\pi_{\bot}[ \tuple{n-2,\sigma\backslash[s_{n-1},s_{n-2}]} \mapsto \set{\tuple{n-2,\sigma\backslash[s_{n-1},s_{n-2}]}}]
\\
\eqn_{pc+1}^{m+1} & \sqsupseteq &
\pi_{\bot}[ \tuple{n-2,\sigma\backslash[s_{n-1},s_{n-2}]} \mapsto \set{\tuple{n-2,\sigma\backslash[s_{n-1},s_{n-2}]}}]
\end{array}
\]
and thus Lemma~\ref{lemma1} holds for these cases as well.

\item{Rules (4) - (12)}: We will first consider the
  case in which any of these rules corresponds to an \EVM instruction
  followed by an instruction different from \code{JUMPDEST}.  All
  rules are similar, as they all use the set of equations generated by
  Case (4) in Definition~\ref{def:jumpeq}.  We will see Rule (4) in detail.

After executing a
  $\text{\code{PUSH}}x~v$ instruction, $S_{m+1}$ is
  $\tuple{pc+size(b_{pc}), \tuple{n+1, \sigma[s_{n} \mapsto
      \set{v}]}}$.  We have to prove that exists some $s
  \in dom(\eqn_{pc+size(b_{pc})})$ such that $\tuple{n+1, \sigma[s_{n} \mapsto
      \set{v}]} \in \eqn_{pc+size(b_{pc})}(s)$.
  The following set of equations is evaluated:
\begin{equation}
\label{eq:rule4-1}
\begin{array}{rcll}
\eqn_{pc+size(b_{pc})} & \sqsupseteq & \tau(\text{\code{PUSH}}x,
\eqn_{pc})
\end{array}
\end{equation}
By Definition~\ref{def:transfer}, $\tau (\text{\code{PUSH}}x,
\eqn_{pc}) = \pi'$, where
$\forall s' \in dom(\pi),$ $\pi'(s') = \lambda(\text{\code{PUSH}}x,
\eqn_{pc}(s'))$.   By the case (1) of the definition of the update
function $\lambda$, we have that:
\begin{equation}
  \label{eq:rule4-2}
\forall \tuple{n'', \sigma''} \in dom(\eqn_{pc}), \pi'(\tuple{n'',
  \sigma''}) = \tuple{n''+1, \sigma''[s_n \mapsto
  \set{v}]} 
\end{equation}

By the induction hypothesis, at Iteration $m$ there exists some $s \in
dom(\eqn^m_{pc})$ such that $\tuple{n,\sigma} \in \eqn^m_{pc}(s)$.
Therefore, by~\ref{eq:rule4-1} and~\ref{eq:rule4-2}, at Iteration
$m+1$ we have that the following holds:
\[
 s \in dom(\eqn^{m+1}_{pc+size(b_{pc})}) ~\text{and}~ 
 \tuple{n+1, \sigma[s_{n} \mapsto
   \set{v}]} \in \eqn_{pc+size(b_{pc})}(s)
\]
and thus Lemma~\ref{lemma1} holds for Rule (4).

\item{Rules (4) - (12)}, followed by a \code{JUMPDEST} instruction.
After executing any of these instructions, $S_{m+1}$ is
  $\tuple{pc+size(b_{pc}), \tuple{n''', \sigma'''}}$, where
  $\tuple{n''', \sigma'''}$ is obtained according to the rule from
  Figure~\ref{fig:jumpsemantics}.  
  We have to prove that exists some $s
  \in dom(\eqn_{pc+size(b_{pc})})$ such that $\tuple{n''', \sigma'''}
  \in \eqn_{pc+size(b_{pc})}(s)$. 
  The following set of equations is evaluated:
\begin{equation}
\label{eq:rule4-3}
\begin{array}{rcll}
\eqn_{pc+size(b_{pc})} & \sqsupseteq & \textit{idmap}(\lambda(b_{pc},
\langle n', \sigma' \rangle)) & \forall s'\in dom(\eqn_{pc}), \tuple{n', \sigma'} \in \eqn_{pc}(s')
\end{array}
\end{equation}
where
$\textit{idmap}(\lambda(b_{pc}, \langle n', \sigma' \rangle)) =
\pi_{\bot}[ \tuple{n'',\sigma'']} \mapsto
\set{\tuple{n'',\sigma''}}]$, where $n''$ and $\sigma`$ are obtained
according to the cases of the updating function detailed in
Figure~\ref{fig:transfer}.  We can see that
$\set{\tuple{n'',\sigma''}}]$ match the modification made to the state
$S_{m+1}$ by the corresponding rule of the semantics.  Therefore, at Iteration
there exists an $s = \set{\tuple{n'',\sigma''}}]$ such that
$\set{\tuple{n'',\sigma''}}] \in \eqn_{pc+size(b_{pc})}^{m+1}$, and
Lemma~\ref{lemma1} also holds.
\end{itemize}

When the algorithm stops Lemma~\ref{lemma1} holds, as for any
$pc$ $\eqn_{pc}^{m+1} \sqsupseteq \eqn_{pc}^{m}$ for each iteration of
the algorithm for solving the equation system of
Section~\ref{sec:from-evm-to-cfg}.
\proofend
\end{proof}

\section{Stack-Sensitive Control Flow Graph} 
At this point, by means of the addresses equation system solution, we compute
the control flow graph of the program. In order to simplify the notation, given a
block $B_i$, we define the function $getId(i,\tuple{n,\sigma})$, which receives
the block identifier $i$ and an abstract stack $\tuple{n,\sigma}$ and returns a
unique identifier for the abstract stack $\tuple{n,\sigma} \in dom(\eqn_{i})$.
Similarly, $getStack(i,id)$ returns the abstract state $\tuple{n,\sigma}$ that
corresponds to the identifier $id$ of block $B_i$.
Besides, we define the function $getSize(pc,id)$ that, given a program point $pc
\in B_i$ and a unique identifier $id$ for $B_i$, returns the value $n'$ s.t.
$\tuple{n,\sigma} = getStack(i,id)$, and $\eqn_{pc}(\tuple{n,\sigma}) =
\tuple{n',\sigma'}$.

\begin{example}
\label{ex:getid}
Given the equation: 
{\small
\[
\eq{64B} \sqsupseteq 
\{
\underbrace{\tuple{7,\set{s_5 \mapsto \hdir{954}}}}_{1} \mapsto
\tuple{7,\set{s_5 \mapsto \hdir{954}}},
\underbrace{\tuple{3,\set{s_1 \mapsto \hdir{142}}}}_{2} \mapsto
\tuple{3,\set{s_1
\mapsto
\hdir{142}}}
\},
\]
}

\noindent
if we compute the functions $getId$ and $getSize$, we have that
$getId(\hdir{64B},\tuple{7,\set{s_5 \mapsto \hdir{954}}}) = 1$ and
$getId(\hdir{64B},\tuple{3,\set{s_1 \mapsto \hdir{142}}}) =
2$. Analogously, $getSize(\hdir{64B},1) = 7$ and
$getSize(\hdir{64B},2) = 3$.  \exampleend
\end{example} 

\begin{definition}[stack-sensitive control flow graph]
\label{def:cfg}
  Given an \EVM program
  $P$, its blocks $B_i \equiv b_i \dots b_j \in blocks(P)$
  and its \emph{flow} analysis results provided by a set of variables
  of the form
  $\eqn_{pc}$ for all $pc \in P$, we define the \emph{control flow
    graph} of $P$ as a directed graph $\CFG=\tuple{V,E}$ with a set of
  vertices
\[V = \{B_{i{:}id} ~|~ 
B_i \in blocks(P) \wedge 
\tuple{n,\sigma} \in dom(\eqn_i) \wedge
id = getId(i,\langle n, \sigma\rangle) \}\]
 and a set of edges $E = E_{jump} \cup E_{next}$ such that:
\[
\begin{array}{rcll}
E_{jump} &=& \{B_{i{:}id} \to B_{d:id_2} ~|~ & b_j \in Jump ~\wedge 
    \\&&& 
    \langle n, \sigma \rangle \in dom(\eqn_{j}) \wedge id =
    getId(i,\tuple{n,\sigma}) ~\wedge~ 
    \\ &&& 
    \langle n', \sigma' \rangle \in \eqn_{j}(\langle n, \sigma \rangle) ~\wedge~
    d = \sigma'(s_{n'-1})  ~\wedge~ 
    \\ &&& 
	\tuple{n'',\sigma''} =\lambda(b_j,\langle n',\sigma' \rangle) \wedge 
    id_2 = getId(d,\tuple{n'',\sigma''}) 
             ~\}
             
\\

E_{next} &=& \{B_{i:id} \to B_{d:id_2} ~|~ & 
				b_j \neq \text{\code{JUMP}} ~\wedge b_j \not\in End ~\wedge
	\\ &&& 
    \langle n, \sigma \rangle \in dom(\eqn_{j}) \wedge id =
    getId(i,\tuple{n,\sigma}) ~\wedge~ 
     \\ &&& \langle n', \sigma' \rangle \in \eqn_{j}(\langle n,
     \sigma \rangle) \wedge d = j+size(b_j) ~\wedge 
     \\ &&&
	\tuple{n'',\sigma''} =\lambda(b_j,\langle n',\sigma' \rangle) \wedge 
    id_2 = getId(d,\tuple{n'',\sigma''}) 
             
\end{array}
\]

\end{definition}

The first relevant point of the control flow graph (CFG) we produce is
that, for producing the set of vertices $V$, we replicate each block
for each different stack state that could be used for invoking it.
Analogously, the different entry stack states are also used to produce
different edges depending on its corresponding replicated blocks. Note
that the definition distinguishes between two kinds of edges. (1)
edges produced by \code{JUMP} or \code{JUMPI} instructions at the end
of the blocks, whose destination is taken from the values stored in
the stack states of the instruction before the jump with
$d = \sigma'(s_{n'-1})$; and (2) edges produced by continuations to
the next instruction, whose destination is computed with
$d = j + size(b_j)$.
In both kinds of edges, as we could have replicated blocks, we apply
function $\lambda$ and get the id of the resulting state to compute
the $id$ of the destination:
$\tuple{n'',\sigma''} =\lambda(b_j,\langle n',\sigma' \rangle) \wedge
id_2 = getId(d,\tuple{n'',\sigma''})$.

\begin{example}
\label{ex:cfg}

Considering the blocks shown in Figure~\ref{fig:cfg-ins} and the
equations shown at Figure~\ref{fig:eqsystem}, the CFG of the program
includes non-replicated nodes for those blocks that only receive one
possible stack state (white nodes in Figure~\ref{fig:cfg-ins}).
However, the nodes that could be reached by two different stack states
(gray nodes in Figure~\ref{fig:cfg-ins}) will be replicated in the
CFG:
\[
\begin{array}{rcl}
V & = \{
B_\hdir{941},
B_\hdir{123},
B_\hdir{954},
B_\hdir{142}, &
B_\hdir{64B{:}1},
B_\hdir{653{:}1}, 
B_\hdir{661{:}1},
B_\hdir{66F{:}1},
B_\hdir{6C3{:}1},
B_\hdir{66E{:}1},
B_\hdir{690{:}1},
B_\hdir{683{:}1},
B_\hdir{691{:}1},
B_\hdir{6D0{:}1},
B_\hdir{6D1{:}1},
\\ &&
B_\hdir{64B{:}2},
B_\hdir{653{:}2}, 
B_\hdir{661{:}2},
B_\hdir{66F{:}2},
B_\hdir{6C3{:}2},
B_\hdir{66E{:}2},
B_\hdir{690{:}2},
B_\hdir{683{:}2},
B_\hdir{691{:}2},
B_\hdir{6D0{:}2},
B_\hdir{6D1{:}2}

\}
\end{array}
\]

\noindent
Analogously, our CFG replicates the edges according to the nodes
replicated (solid and dashed edges in Figure~\ref{fig:cfg-ins}):

\begin{tabular}{rcp{14.5cm}l}
E  &=&
\{
$B_\hdir{941} \rightarrow B_\hdir{64B{:}1},$
$B_\hdir{64B{:}1} \rightarrow B_\hdir{653{:}1},$
$B_\hdir{653{:}1} \rightarrow B_\hdir{661{:}1},$
$B_\hdir{661{:}1} \rightarrow B_\hdir{66F{:}1},$
$B_\hdir{66F{:}1} \rightarrow B_\hdir{6C3{:}1},$
$B_\hdir{6C3{:}1} \rightarrow B_\hdir{653{:}1},$
$B_\hdir{66D{:}1} \rightarrow B_\hdir{66E{:}1},$
$B_\hdir{66F{:}1} \rightarrow B_\hdir{690{:}1},$
$B_\hdir{66F{:}1} \rightarrow B_\hdir{683{:}1},$
$B_\hdir{683{:}1} \rightarrow B_\hdir{691{:}1},$
$B_\hdir{691{:}1} \rightarrow B_\hdir{6D1{:}1},$
$B_\hdir{6D1{:}1} \rightarrow B_\hdir{954},$
$B_\hdir{123} \dashrightarrow B_\hdir{64B{:}2},$
$B_\hdir{64B{:}2} \dashrightarrow B_\hdir{653{:}2},$
$B_\hdir{653{:}2} \dashrightarrow B_\hdir{661{:}2},$
$B_\hdir{661{:}2} \dashrightarrow B_\hdir{66F{:}2},$
$B_\hdir{66F{:}2} \dashrightarrow B_\hdir{6C3{:}2},$
$B_\hdir{6C3{:}2} \dashrightarrow B_\hdir{653{:}2},$
$B_\hdir{66D{:}2} \dashrightarrow B_\hdir{66E{:}2},$
$B_\hdir{66F{:}2} \dashrightarrow B_\hdir{690{:}2},$
$B_\hdir{66F{:}2} \dashrightarrow B_\hdir{683{:}2},$
$B_\hdir{683{:}2} \dashrightarrow B_\hdir{691{:}2},$
$B_\hdir{691{:}2} \dashrightarrow B_\hdir{6D1{:}2},$
$B_\hdir{6D1{:}2} \dashrightarrow B_\hdir{142}$
\}
\end{tabular}
\medskip

\noindent Note that, in Figure~\ref{fig:cfg-ins}, we distinguish
dashed and solid edges just to remark that we could have two possible
execution paths, that is, if the call to \code{findWinner} comes from
block $B_\hdir{941}$, it will return to block $B_\hdir{954}$ and, if
the execution comes from a public invocation, \ie block
$B_\hdir{123}$, it will return to block $B_\hdir{142}$.  \exampleend
\end{example}

\begin{theorem}[Soundness of the stack-sensitive control flow graph]
  \label{theorem:scfg}
  Let $P$ be an EVM program. If a stack-sensitive control flow graph
  \CFG\ can be generated, then for any execution trace $t$ of $P$
  there exists a directed walk that visits, in the same order, nodes
  in the \CFG\ that correspond to replicas of the blocks executed in
  $t$.
\end{theorem}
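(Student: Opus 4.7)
The plan is to prove Theorem~\ref{theorem:scfg} by induction on the length of the execution trace $t \equiv S_0 \rrderivproof S_1 \rrderivproof \dots \rrderivproof S_n$, using Lemma~\ref{lemma1} as the bridge between concrete runtime states and the equation-system variables $\eqn_{pc}$. First I would extract from $t$ the subsequence $S_{k_0}, S_{k_1}, \dots, S_{k_m}$ of block-entry states, i.e., those of the form $S_{k_j} = \tuple{i_j, \tuple{n_j, \sigma_j}}$ with $i_j$ the first program counter of some block $B_{i_j} \in blocks(P)$. Lemma~\ref{lemma1} applied at $pc = i_j$ supplies some $s_j \in dom(\eqn_{i_j})$ with $\tuple{n_j, \sigma_j} \in \eqn_{i_j}(s_j)$. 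Since $i_j$ is a block entry, $\eqn_{i_j}$ is populated exclusively through the $idmap(\cdot)$ right-hand sides of Cases~(1)--(3) of Definition~\ref{def:jumpeq}, so $s_j$ is itself an abstract entry stack of $B_{i_j}$. Setting $id_j := getId(i_j, s_j)$ produces the candidate walk $B_{i_0:id_0} \to B_{i_1:id_1} \to \dots \to B_{i_m:id_m}$. The base case is immediate: $S_0 = \tuple{0,\tuple{0,\sigma_{\emptyset}}}$ is the entry to $B_0$ with $s_0 = \tuple{0,\sigma_{\emptyset}} \in dom(\eqn_0)$.

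For the inductive step I must show that each consecutive pair $B_{i_j:id_j} \to B_{i_{j+1}:id_{j+1}}$ is an edge of $\CFG$. Let $pc$ be the last program counter of $B_{i_j}$ reached in $t$ and let $\tuple{n,\sigma}$ be the concrete stack at $pc$. Following the instructions of $B_{i_j}$ from its entry and applying Lemma~\ref{lemma1} step by step (observing that Case~(4) of Definition~\ref{def:jumpeq} preserves the domain of the $\eqn_{pc}$ under $\tau$), there exists $\tuple{n',\sigma'} \in \eqn_{pc}(s_j)$ that matches $\tuple{n,\sigma}$. Two subcases arise. If $b_{pc} \in \{\text{\code{JUMP}},\text{\code{JUMPI}}\}$, the operational rules (1)--(3) send the trace to $d = \sigma(s_{n-1}) = \sigma'(s_{n'-1})$, and Cases~(1)--(2) of Definition~\ref{def:jumpeq} together with $E_{jump}$ supply exactly the edge $B_{i_j:id_j} \to B_{d:id_{j+1}}$ with $id_{j+1} = getId\bigl(d, \lambda(b_{pc}, \tuple{n',\sigma'})\bigr)$. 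If instead $b_{pc}$ falls through to a $\text{\code{JUMPDEST}}$, Case~(3) of Definition~\ref{def:jumpeq} and the clause $E_{next}$ yield the analogous edge.

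The main obstacle will be verifying that the replica identifier $id_{j+1}$ dictated by the edge definition coincides with the identifier assigned to $S_{k_{j+1}}$ at the start of the construction. Both are computed by applying $getId$ to the abstract stack $\lambda(b_{pc}, \tuple{n',\sigma'})$: on the $\CFG$ side this is literal, while on the trace side one must argue that the $idmap$ term in Cases~(1)--(3) of Definition~\ref{def:jumpeq} inserts precisely this abstract stack into $dom(\eqn_{i_{j+1}})$, and that the operational rules (1)--(3) for $b_{pc}$ transform $\tuple{n,\sigma}$ in exactly the way $\lambda$ transforms $\tuple{n',\sigma'}$. This agreement can be checked case-by-case against the cases of the updating function $\lambda$ in Figure~\ref{fig:transfer} and the corresponding semantics rules in Figure~\ref{fig:jumpsemantics}, each pair of which is designed to coincide on the jump-address component of the stack. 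Once this correspondence is established, the chain of edges forms a directed walk in $\CFG$ visiting, in order, exactly the replicas of the blocks executed along $t$, which is the desired conclusion.
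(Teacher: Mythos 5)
Your proposal is correct and follows essentially the same route as the paper's own proof: induction on the length of the trace, with Lemma~\ref{lemma1} bridging concrete states to the $\eqn_{pc}$ variables and a case split on jump versus fall-through instructions to obtain the required edges from $E_{jump}$ and $E_{next}$. The only notable difference is that you organize the induction around block-entry states and explicitly flag the need to check that the replica identifier produced by $getId$ on the edge side coincides with the one chosen at the destination block's entry --- a point the paper's proof asserts rather than verifies --- so your version is, if anything, slightly more careful on that step.
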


\begin{proof}
  We prove this theorem reasoning by induction on the value of $m$,
  the length of the trace $t \equiv S_0 \rrderivproof^m S_m$. We will
  assume that a directed walk of the \CFG\ is of the form
  $B_{0:0}\cdot\ldots\cdot B_{n:id_n}$, where $B_{0:0}$ is a replica
  of the block that contains the first instruction in the program
  $b_0$.

\textbf{Base case}: if $m = 0$,
$S_0 = \tuple{0,\tuple{0, \sigma_{\emptyset}}}$ and the Lemma
trivially holds as $b_0$ is the first instruction of block $B_0$.

\textbf{Induction Hypothesis}: we assume Theorem~\ref{theorem:scfg}
holds for all traces of length $m \geq 0$.

\textbf{Inductive Case}: Let us consider a trace of length $m + 1$,
$t \equiv S_0 \rrderivproof^m S_m \rrderivproof S_{m+1}$. $S_m$ is a
program state of the form $S_m = \tuple{pc, \tuple{n_m, \sigma_m}}$.  We can apply the
induction hypothesis to $S_m$: there exists a directed walk in \CFG,
$w \equiv B_{0:0}\cdot \ldots \cdot B_{j:id}$ that visits nodes
corresponding 
to replicas of the blocks executed in $t$ in the same
order, and $b_{pc}\in B_{j:id}$. There may be two cases:

\begin{itemize}
\item [a)] Instruction $b_{pc}$ is not the last instruction in
  $B_{j:id}$. By Definition~\ref{def:block}, $b_{pc+size(b_{pc})}$ is
  also in $B_{j:id}$, and $b_{pc}\not\in Jump$. The applicable rules
  of the semantics of Figure~\ref{fig:jumpsemantics} are Rules
  \textsc{(4)} to \textsc{(12)}. In all cases,
  $S_{m+1} = \tuple{pc+size(b_{pc}), \tuple{n_{m+1}, \sigma_{m+1}}}$ and 
  Theorem~\ref{theorem:scfg} holds, since the same directed walk $w$
  already visits a replica of the node that contains the instruction
  executed in $S_{m+1}$.

\item [b)] Instruction $b_{pc}$ is the last instruction in block
  $B_{j:id}$.  We reason on all possible instructions that can be the
  last instruction of a block:

  \begin{itemize}
  \item $b_{pc}\not\in Jump$. This case is the result of the
    application of Rules \textsc{(4-12)} of the Semantics in
    Figure~\ref{EVM_semantics}. Therefore, $S_{m+1}$ is of the form
    $S_{m+1} = \tuple{d, \tuple{n_{m+1}, \sigma_{m+1}}}$, where
    $d = b_{pc + size(b_{pc})}$.
    Lemma~\ref{lemma1} guarantees that there exists a stack state
    $\tuple{n,\sigma}$ such that
    $\tuple{n_{m}, \sigma_{m}} \in \eqn_{pc}(\tuple{n,\sigma})$. By
    Definition~\ref{def:cfg} there is an edge in $E_{next}$ of the
    form $B_{j:id} \to B_{d:id_2}$ where $d = pc+size(b_{pc})$ for
    each element in $\eqn_{pc}(\tuple{n,\sigma})$. Therefore, there
    exists a 
    directed walk $w' = w\cdot B_{d:id_2}$ that visits nodes in \CFG\
    corresponding to replicas of the blocks executed in $t$, and
    Theorem~\ref{theorem:scfg} holds.

  \item $b_{pc}\in Jump$. This case is the result of the
    application of Rules~\textsc{(1-3)} of the Semantics in
    Figure~\ref{EVM_semantics}. 

    The application of Rules~\textsc{(1-2)} corresponds to a jump in
    the code, and $S_{m+1}$ is of the form
    $S_{m+1} = \tuple{d_{m+1}, \tuple{n_{m+1}, \sigma_{m+1}}}$, where
    $d_{m+1} = \sigma_m(s_{n_m-1})$. 
    Lemma~\ref{lemma1} guarantees that there exists a stack state
    $\tuple{n,\sigma}$ such that
    $\tuple{n_{m}, \sigma_{m}} \in \eqn_{pc}(\tuple{n,\sigma})$. By
    Definition~\ref{def:cfg}, $E_{Jump}$ contains an edge
    $B_{j{:}id} \to B_{d:id_2}$ for each element in
    $\eqn_{pc}(\tuple{n,\sigma})$, such that
    $d = \sigma_{m}(s_{n_m-1})$ and $id_2$ is the replica identifier
    of block $d$ corresponding to
    $\lambda(b_{pc},\tuple{n_{m}, \sigma_{m}})$. Therefore,
    $d = d_{m+1}$, and the directed walk $w' = w\cdot B_{d:id_2}$
    visits nodes in \CFG\ corresponding to replicas of the blocks
    executed in $t$, so Theorem~\ref{theorem:scfg} holds.

    The application of Rule~\textsc{(3)} corresponds to a \code{JUMPI}
    instruction in the code that does not jump to its destination
    address. This case is equal to the previous case in which
    $b_{pc}\not\in Jump$. 
  \end{itemize}

\end{itemize}

  \proofend
\end{proof}





\bibliographystyle{plain}
\bibliography{biblio}

\end{document}